\documentclass{article}

\usepackage{latexsym}
\usepackage{amsmath, amsfonts, amsthm}
\usepackage{epsfig}
\usepackage{stmaryrd}
\usepackage{multicol}
\usepackage{pdfsync}
\usepackage{dsfont}
\usepackage{comment}
\usepackage{algorithmic}
\usepackage{algorithm}
\usepackage{caption}
\usepackage{psfrag}
\usepackage{xcolor}
\usepackage{graphics}

\usepackage{mathtools}

\newcommand{\norm}[1]{\left\lVert #1 \right\rVert}

\definecolor{OliveGreen}{rgb}{0,0.6,0}



\newtheorem{theorem}{Theorem}[section]

\theoremstyle{definition}

\theoremstyle{remark}
\newtheorem{remark}[theorem]{Remark}
\newtheorem{example}[theorem]{Example}
\newtheorem{condition}[theorem]{Condition}
\numberwithin{equation}{section}


\topmargin=-1in    
\textheight=9.0in  
\oddsidemargin=0pt 
\textwidth=6.5in   

\usepackage{setspace}


\title{DGM: A deep learning algorithm for solving partial differential equations}

\author{Justin Sirignano\footnote{University of Illinois at Urbana Champaign, Urbana, E-mail: jasirign@illinois.edu} \phantom{.}  and Konstantinos Spiliopoulos\footnote{Department of Mathematics and Statistics, Boston University, Boston, E-mail: kspiliop@math.bu.edu} \thanks{The authors
thank seminar participants at the JP Morgan Machine Learning and AI Forum seminar, the Imperial College London Applied Mathematics and Mathematical Physics seminar, the Department of Applied Mathematics at the University of Colorado Boulder, Princeton University, and Northwestern University for their comments. The authors would also like to thank participants at the 2017 INFORMS Applied Probability Conference, the 2017 Greek Stochastics Conference, and the 2018 SIAM Annual Meeting for their comments.} \thanks{Research of K.S. supported in part by the National Science Foundation (DMS 1550918). Computations for this paper were performed using the Blue Waters supercomputer grant ``Distributed Learning with Neural Networks". }  \\
}

\date{\today}

\begin{document}

\maketitle

\begin{abstract}High-dimensional PDEs have been a longstanding computational challenge. We propose to solve high-dimensional PDEs by approximating the solution with a deep neural network which is trained to satisfy the differential operator, initial condition, and boundary conditions. Our algorithm is meshfree, which is key since meshes become infeasible in higher dimensions. Instead of forming a mesh, the neural network is trained on batches of randomly sampled time and space points. The algorithm is tested on a class of high-dimensional free boundary PDEs, which we are able to accurately solve in up to $200$ dimensions. The algorithm is also tested on a high-dimensional Hamilton-Jacobi-Bellman PDE \textcolor{black}{and Burgers' equation.} The deep learning algorithm approximates the general solution to the Burgers' equation for a continuum of different boundary conditions and physical conditions (which can be viewed as a high-dimensional space). We call the algorithm a ``Deep Galerkin Method (DGM)" since it is similar in spirit to Galerkin methods, with the solution approximated by a neural network instead of a linear combination of basis functions. In addition, we prove a theorem regarding the approximation power of neural networks for \textcolor{black}{a class of quasilinear parabolic PDEs}.
\end{abstract}

\section{Deep learning and high-dimensional PDEs}
High-dimensional partial differential equations (PDEs) are used in physics, engineering, and finance. Their numerical solution has been a longstanding challenge. Finite difference methods become infeasible in higher dimensions due to the explosion in the number of grid points and the demand for reduced time step size. If there are $d$ space dimensions and $1$ time dimension, the mesh is of size $\mathcal{O}^{d+1}$. \textcolor{black}{This quickly becomes computationally intractable when the dimension $d$ becomes even moderately large. We propose to solve high-dimensional PDEs using a meshfree deep learning algorithm.} The method is similar in spirit to the Galerkin method, but with several key changes using ideas from machine learning. \textcolor{black}{The Galerkin method is a widely-used computational method which seeks a reduced-form solution to a PDE as a linear combination of basis functions.} The deep learning algorithm, or ``Deep Galerkin Method" (DGM), uses a deep neural network instead of a linear combination of basis functions. The deep neural network is trained to satisfy the differential operator, initial condition, and boundary conditions using stochastic gradient descent at randomly sampled spatial points. By randomly sampling spatial points, we avoid the need to form a mesh (which is infeasible in higher dimensions) and instead convert the PDE problem into a machine learning problem.

DGM is a natural merger of Galerkin methods and machine learning. The algorithm in principle is straightforward; see Section \ref{Algorithm}. Promising numerical results are presented later in Section \ref{NumericalAnalysis} for a class of high-dimensional free boundary PDEs. We also accurately solve a high-dimensional Hamilton-Jacobi-Bellman PDE in Section \ref{HJBsection} \textcolor{black}{and Burger's equation in Section \ref{BurgerEquation}}. DGM converts the computational cost of finite difference to a more convenient form: instead of a huge mesh of $\mathcal{O}^{d+1}$ (which is infeasible to handle), many batches of random spatial points are generated. Although the total number of spatial points could be vast, the algorithm can process the spatial points sequentially without harming the convergence rate.

\textcolor{black}{Deep learning has revolutionized fields such as image, text, and speech recognition. These fields require statistical approaches which can model nonlinear functions of high-dimensional inputs. Deep learning, which uses multi-layer neural networks (i.e., ``deep neural networks"), has
proven very effective in practice for such tasks. A multi-layer neural network is essentially a ``stack" of nonlinear operations where each operation is prescribed by certain parameters that must be estimated from data. Performance in practice can strongly depend upon the specific form of the neural network architecture and the training algorithms which are used. The design of neural network architectures and training methods has been the focus of intense research over the past decade. Given the success of deep learning, there is also growing interest in applying it to a range of other areas in science and engineering (see Section \ref{RelevantLiterature} for some examples).}

Evaluating the accuracy of the deep learning algorithm is not straightforward. PDEs with semi-analytic solutions may not be sufficiently challenging. (After all, the semi-analytic solution exists since the PDE can be transformed into a lower-dimensional equation.) It cannot be benchmarked against traditional finite difference (which fails in high dimensions). We test the deep learning algorithm on a class of high-dimensional free boundary PDEs which have the special property that error bounds can be calculated for any approximate solution. This provides a unique opportunity to evaluate the accuracy of the deep learning algorithm on a class of high-dimensional PDEs \emph{with no semi-analytic solutions}.

This class of high-dimensional free boundary PDEs also has important applications in finance, where it used to price American options. An American option is a financial derivative on a portfolio of stocks. The number of space dimensions in the PDE equals the number of stocks in the portfolio. Financial institutions are interested in pricing options on portfolios ranging from dozens to even hundreds of stocks \cite{Reisinger}. Therefore, there is a significant need for numerical methods to accurately solve high-dimensional free boundary PDEs.

We also test the deep learning algorithm on a high-dimensional Hamilton-Jacobi-Bellman PDE with accurate results. We consider a high-dimensional Hamilton-Jacobi-Bellman PDE motivated by the problem of optimally controlling a stochastic heat equation.

\textcolor{black}{Finally, it is often of interest to find the solution of a PDE over a range of problem setups (e.g., different physical conditions and boundary conditions). For example, this may be useful for the design of engineering systems or uncertainty quantification. The problem setup space may be high-dimensional and therefore may require solving many PDEs for many different problem setups, which can be computationally expensive. We use our deep learning algorithm to approximate the general solution to the Burgers' equation for different boundary conditions, initial conditions, and physical conditions.}

\textcolor{black}{In the remainder of the Introduction, we provide an overview of our results regarding the approximation power of neural networks for quasilinear parabolic PDEs (Section \ref{IntroApproxPower}), and relevant literature (Section \ref{RelevantLiterature}).} \textcolor{black}{The deep learning algorithm for solving PDEs is presented in Section \ref{Algorithm}. An efficient scheme for evaluating the diffusion operator is developed in Section \ref{ModifiedAlgorithm}.  Numerical analysis of the algorithm is presented in Sections \ref{NumericalAnalysis}, \ref{HJBsection}, and \ref{BurgerEquation}. We implement and test the algorithm on a class of high-dimensional free boundary PDEs in up to 200 dimensions. The theorem and proof for the approximation of PDE solutions with neural networks is presented in Section \ref{ApproximationProof}. Conclusions are in Section \ref{Conclusion}.  \textcolor{black}{For readability purposes, proofs from Section \ref{ApproximationProof} have been collected in  Appendix \ref{S:Proofs}.}}
\subsection{Approximation Power of Neural Networks for PDEs} \label{IntroApproxPower}
We also prove a theorem regarding the approximation power of neural networks for \textcolor{black}{a class of quasilinear parabolic  PDEs}. Consider the potentially nonlinear PDE
\begin{eqnarray}
&& \partial_t u(t,x)  + \mathcal{L} u(t,x) = 0, \phantom{.......} (t,x) \in [0,T] \times \Omega \notag \\
&& u(0,x) = u_0(x), \phantom{........} x \in \Omega \notag \\
&& u(t,x) = g(t,x), \phantom{........} x \in [0,T] \times \partial \Omega,
\label{GeneralPDE}
\end{eqnarray}
\textcolor{black}{where $\partial \Omega$ is the boundary of the domain $\Omega$}. The solution $u(t,x)$ is of course unknown, but an approximate solution $f(t,x)$ can be found by minimizing the L$^2$ error
\begin{align}
J( f )&=\norm{ \partial_t f + \mathcal{L}f }^{2}_{2, [0,T] \times \Omega } +   \norm{  f -g}^{2}_{2, [0,T]  \times \partial \Omega} + \norm{ f (0, \cdot) - u_0 }^{2}_{2, \Omega}. \notag
\end{align}
The error function $J(f)$ measures how well the approximate solution $f$ satisfies the differential operator, boundary condition, and initial condition. Note that no knowledge of the actual solution $u$ is assumed; $J(f)$ can be directly calculated from the PDE (\ref{GeneralPDE}) for any approximation $f$. The goal is to construct functions $f$ for which  $J(f)$  is as close to  $0$ as possible. Define $\mathfrak{C}^{n}$ as the class of neural networks \textcolor{black}{with a single hidden layer and $n$ hidden units.}\footnote{\textcolor{black}{A neural network with a single hidden layer and $n$ hidden units is a function of the form $\mathfrak{C}^{n}=\left\{h(t,x):\mathbb{R}^{1+d}\mapsto\mathbb{R}: h(t,x)=\sum_{i=1}^{n}\beta_{i}\psi\left(\alpha_{1,i}t+\sum_{j=1}^{d}\alpha_{j,i}x_{j}+c_{j}\right)\right\}$ where $\Psi: \mathbb{R} \rightarrow \mathbb{R}$ is a nonlinear ``activation" function such as a sigmoid or tanh function.}} Let $f^{n}$ be a neural network with $n$ hidden units which minimizes $J(f)$. We prove that, under certain conditions,
\begin{eqnarray}
&& \text{there exists }f^{n}\in \mathfrak{C}^{n} \text{ such that }  J(f^{n})\rightarrow 0, \text{ as }n\rightarrow \infty, \text{ and} \notag \\
&& f^{n} \rightarrow u \phantom{....} \textrm{as} \phantom{....} n \rightarrow \infty, \notag
\end{eqnarray}
strongly in, $L^{\rho}([0,T] \times \Omega)$, with $\rho<2$, for \textcolor{black}{a class of quasilinear parabolic PDEs}; see subsection \ref{ConvergenceProof} and Theorem \ref{T:MainConvTheorem} therein for the precise statement. That is, the neural network will converge in $L^{\rho}, \rho<2$ to the solution of the PDE \textcolor{black}{as the number of hidden units tends to infinity.}  The precise statement of the theorem and its proof are presented in Section \ref{ApproximationProof}. The proof requires the joint analysis of the approximation power of neural networks as well as the continuity properties of partial differential equations. Note that $J(f^{n}) \rightarrow 0$ does not necessarily imply that $f^{n} \rightarrow u$, given that we only have $L^{2}$ control on the approximation error. First, we prove that $J(f^{n}) \rightarrow 0$ as $n \rightarrow \infty$. We then establish that each neural network $\{ f^{n} \}_{n=1}^{\infty}$ satisfies a PDE with a source term $h^n(t,x)$.   \textcolor{black}{We are then able to prove, under certain conditions, the convergence of $f^{n} \rightarrow u$ as $n \rightarrow \infty$ in $L^{\rho}([0,T] \times \Omega)$, for $\rho<2$,    using  the smoothness of the neural network approximations and compactness arguments.}

Theorem \ref{T:MainConvTheorem} establishes the approximation power of neural networks for solving PDEs (at least within \textcolor{black}{a class of quasilinear parabolic PDEs}); however, directly minimizing $J(f)$ is not computationally tractable since it involves high-dimensional integrals. The DGM algorithm minimizes $J(f)$ using a meshfree approach; see Section \ref{Algorithm}.

\subsection{Relevant Literature} \label{RelevantLiterature}

Solving PDEs with a neural network as an approximation is a natural idea, and has been considered in various forms previously. \cite{Lagaris}, \cite{Likas}, \cite{Rudd},  \cite{Lee}, and \cite{Malek} propose to use neural networks to solve PDEs and ODEs.  These papers estimate neural network solutions on an a priori fixed mesh. This paper proposes using \emph{deep} neural networks and is \emph{meshfree}, which is key to solving high-dimensional PDEs.

In particular, this paper explores several new innovations. First, we focus on high-dimensional PDEs and apply deep learning advances of the past decade to this problem (deep neural networks instead of shallow neural networks, improved optimization methods for neural networks, etc.). Algorithms for high-dimensional free boundary PDEs are developed, efficiently implemented, and tested. In particular, we develop an iterative method to address the free boundary. Secondly, to avoid ever forming a mesh, we sample a sequence of random spatial points. This produces a meshfree method, which is essential for high-dimensional PDEs. Thirdly, the algorithm incorporates a new computational scheme for the efficient computation of neural network gradients arising from the second derivatives of high-dimensional PDEs.

Recently, \cite{Karniadakis1,Karniadakis2} develop physics informed deep learning models. They estimate deep neural network models which merge data observations with PDE models. This allows for the estimation of  physical models from limited data by leveraging a priori knowledge that the physical dynamics should obey a class of PDEs. Their approach solves PDEs in one and two spatial dimensions using deep neural networks. \cite{Ling} uses a deep neural network to model the Reynolds stresses in a Reynolds-averaged Navier-Stokes (RANS) model. RANS is a reduced-order model for turbulence in fluid dynamics. \cite{Weinan, Jentzen} have also recently developed a scheme for solving a class of quasilinear PDEs which can be represented as forward-backward stochastic differential equations (FBSDEs) and \cite{Masaaki} further develops the algorithm. The algorithm developed in \cite{Weinan,Jentzen,Masaaki} focuses on computing the value of the PDE solution at a single point. \textcolor{black}{The algorithm that we present here is  different; in particular, it does not rely on the availability of FBSDE representations and yields the entire solution of the PDE across all time and space. In addition, the deep neural network architecture that we use, which is different from the ones used in \cite{Weinan, Jentzen}, seems to be able to recover accurately the entire solution (at least for the equations that we studied). } \cite{Tompson} use a convolutional neural network to solve a large sparse linear system which is required in the numerical solution of the Navier-Stokes PDE. In addition, \cite{Chaudhari} has recently developed a novel partial differential equation approach to optimize deep neural networks.

\cite{LongstaffSchwartz} developed an algorithm for the solution of a discrete-time version of a class of free boundary PDEs. Their algorithm, commonly called the ``Longstaff-Schwartz method", uses dynamic programming and approximates the solution using a separate function approximator at each discrete time (typically a linear combination of basis functions). Our algorithm directly solves the PDE, and uses a single function approximator for all space and all time. The Longstaff-Schwartz algorithm has been further analyzed by \cite{ChrisRogers}, \cite{Haugh}, and others.  Sparse grid methods have also been used to solve high-dimensional PDEs; see \cite{Reisinger}, \cite{Reisinger2}, \cite{Bungartz1}, \cite{Bungartz2}, and \cite{Griebel}.

\textcolor{black}{In regards to general results on the approximation power of neural networks we refer the interested reader to classical works \cite{Cybenko, HornikEtAl,Hornik, Pinkus} and we also mention the recent work by \cite{Petersen2017}, where the authors study the necessary and sufficient complexity of ReLU neural networks that is required for approximating classifier functions in the mean square sense.}

\section{Algorithm} \label{Algorithm}
Consider a parabolic PDE with $d$ spatial dimensions:
\begin{eqnarray}
&\phantom{.}& \frac{\partial u}{\partial t}(t,x) +  \mathcal{L} u(t,x)  = 0 , \phantom{....} (t,x) \in [0,T] \times \Omega, \notag \\
& \phantom{.}& u(t=0,x )  = u_0(x), \notag \\
& \phantom{.} & u(t,x) = g(t, x),  \phantom{....} x \in \partial \Omega,
\label{ParabolicPDE}
\end{eqnarray}
where $x \in \Omega \subset \mathbb{R}^d$. The DGM algorithm approximates $u(t,x)$ with a deep neural network $f(t,x; \theta)$ where $\theta \in \mathbb{R}^K$ are the neural network's parameters. Note that the differential operators $\frac{\partial f}{\partial t}(t, x; \theta)$ and $\mathcal{L} f(t,x; \theta)$ can be calculated analytically.  Construct the objective function:
\begin{eqnarray*}
J(f) =   \norm{  \frac{\partial f}{\partial t}(t,x; \theta) +  \mathcal{L} f(t,x; \theta)}_{[0,T] \times \Omega, \nu_1}^2 +   \norm{  f(t,x; \theta ) - g(t,x) }_{ [0,T] \times \partial \Omega, \nu_2}^2 + \norm{ f(0, x; \theta) - u_0(x) }_{ \Omega, \nu_3}^2.
\end{eqnarray*}
Here, $\norm{ f(y) }^{2}_{\mathcal{Y}, \nu} = \int_{\mathcal{Y}} \left|f(y)\right|^2  \nu(y) dy$ where $\nu(y)$ is a positive probability density on $y \in \mathcal{Y}$.  $J(f)$ measures how well the function $f(t,x; \theta)$ satisfies the PDE differential operator, boundary conditions, and initial condition.  If $J(f) = 0$, then $f(t,x; \theta)$ is a solution to the PDE (\ref{ParabolicPDE}).

The goal is to find a set of parameters $\theta$ such that the function $f(t,x; \theta)$ minimizes the error $J(f)$.  If the error $J(f)$ is small, then $f(t,x; \theta)$ will closely satisfy the PDE differential operator, boundary conditions, and initial condition. Therefore, a $\theta$ which minimizes $J(f(\cdot; \theta))$ produces a reduced-form model $f(t,x; \theta)$ which approximates the PDE solution $u(t,x)$.

Estimating $\theta$ by directly minimizing $J(f)$ is infeasible when the dimension $d$ is large since the integral over $\Omega$ is computationally intractable.  However, borrowing a machine learning approach, one can instead minimize $J(f)$ using stochastic gradient descent on a sequence of time and space points \emph{drawn at random} from $\Omega$ and $\partial \Omega$. This avoids ever forming a mesh.

The DGM algorithm is:
\begin{enumerate} \label{DGMalgorithM}
\item Generate random points $(t_n, x_n)$ from $[0,T] \times \Omega$ and $(\tau_n, z_n)$ from $[0,T] \times \partial \Omega$ according to respective probability densities $\nu_1$ and $\nu_2$.  Also, draw the random point $w_n$ from $\Omega$ with probability density $\nu_3$.
\item Calculate the squared error $G(\theta_n ,s_n)$ at the randomly sampled points $s_n = \{ (t_n, x_n), ( \tau_n, z_n), w_n \}$ where:
\begin{eqnarray*}
G(\theta_n , s_n ) =   \bigg{(} \frac{\partial f}{\partial t}(t_n ,x_n; \theta_n) +  \mathcal{L} f(t_n,x_n; \theta_n) \bigg{)}^2 +   \bigg{(} f(\tau_n ,z_n ; \theta_n ) - g(\tau_n,z_n) \bigg{)}^2 + \bigg{(}  f(0, w_n; \theta_n) - u_0(w_n) \bigg{)}^2. \notag
\end{eqnarray*}
\item Take a descent step at the random point $s_n$:
\begin{eqnarray*}
\theta_{n+1} = \theta_n - \alpha_n \nabla_{\theta} G(\theta_n, s_n)  \notag
\end{eqnarray*}
\item Repeat until convergence criterion is satisfied.
\end{enumerate}
\textcolor{black}{The ``learning rate" $\alpha_n$ decreases with $n$.} The steps $ \nabla_{\theta} G(\theta_n, s_n) $ are unbiased estimates of $\nabla_{\theta} J( f(\cdot;\theta_n) )$:
\begin{eqnarray*}
\mathbb{E}\big{[} \nabla_{\theta} G(\theta_n, s_n) \big{|} \theta_n  \big{]}  = \nabla_{\theta} J(f(\cdot; \theta_n)).
\end{eqnarray*}
 \textcolor{black}{Therefore, the stochastic gradient descent algorithm will on average take steps in a \emph{descent direction} for the objective function $J$. A descent direction means that the objective function decreases after an iteration (i.e., $J( f(\cdot;\theta_{n+1}) ) < J( f(\cdot;\theta_n) )$ ), and $\theta_{n+1}$ is therefore a better parameter estimate than $\theta_{n}$.}

Under (relatively mild) technical conditions (see \cite{BertsekasThitsiklis2000}), the algorithm $\theta_n$ will converge to a critical point of the objective function $J(f(\cdot;\theta))$ as $n \rightarrow \infty$:
\begin{eqnarray*}
\lim_{n \rightarrow \infty} \norm{ \nabla_{\theta} J( f(\cdot;\theta_n) ) }  = 0.
\end{eqnarray*}

It's important to note that $\theta_n$ may only converge to a local minimum when $f(t, x; \theta)$ is non-convex. This is generally true for non-convex optimization and is not specific to this paper's algorithm. \textcolor{black}{In particular, deep neural networks are non-convex. Therefore, it is well known that stochastic gradient descent may only converge to a local minimum (and not a global minimum) for a neural network. Nevertheless, stochastic gradient descent has proven very effective in practice and is the fundamental building block of nearly all approaches for training deep learning models.}

\section{A Monte Carlo Method for Fast Computation of Second Derivatives} \label{ModifiedAlgorithm}
This section describes a modified algorithm which may be more computationally efficient in some cases. The term $\mathcal{L} f(t,x; \theta)$ contains second derivatives $\frac{\partial^2 f}{\partial x_i x_j} (t,x; \theta)$ which may be expensive to compute in higher dimensions. For instance, $20,000$ second derivatives must be calculated in $d=200$ dimensions.

\textcolor{black}{The complicated architectures of neural networks can make it computationally costly to calculate the second derivatives (for example, see the neural network architecture (\ref{ArchitectureNeuralNetwork})). The computational cost for calculating second derivatives (in both total arithmetic operations and memory) is $\mathcal{O}(d^2 \times N)$ where $d$ is the spatial dimension of $x$ and $N$ is the batch size. In comparison, the computational cost for calculating first derivatives is $\mathcal{O}(d \times N)$. The cost associated with the second derivatives is further increased since we actually need the third-order derivatives $\nabla_{\theta} \frac{\partial^2 f}{\partial x^2}(t,x; \theta)$ for the stochastic gradient descent algorithm. Instead of directly calculating these second derivatives, we approximate the second derivatives using a Monte Carlo method.}

Suppose the sum of the second derivatives in $\mathcal{L} f(t,x,; \theta)$ is of the form $\frac{1}{2} \sum_{i,j=1}^d \rho_{i,j} \sigma_{i}(x) \sigma_j(x)  \frac{\partial^2 f}{\partial x_i x_j} (t,x; \theta)$, assume $[\rho_{i,j}]_{i,j=1}^d$ is a positive definite matrix, and define $\sigma(x) = \bigg{(} \sigma_1(x), \ldots, \sigma_d(x) \bigg{)}$. \textcolor{black}{For example, such PDEs arise when considering expectations of functions of stochastic differential equations, where the $\sigma(x)$ represents the diffusion coefficient. See equation (\ref{SDEx}) and the corresponding discussion.} A generalization of the algorithm in this section to second derivatives with nonlinear coefficient dependence on $u(t,x)$ is also possible. Then,
\begin{eqnarray}
\sum_{i,j=1}^d \rho_{i,j} \sigma_{i}(x) \sigma_j(x)  \frac{\partial^2 f}{\partial x_i x_j} (t,x; \theta) =  \lim_{\Delta \rightarrow 0} \mathbb{E} \bigg{[} \sum_{i=1}^d  \frac{ \frac{\partial f}{\partial x_i}(t, x + \sigma(x)W_{\Delta}; \theta ) - \frac{\partial f}{\partial x_i}(t,x; \theta)  }{\Delta}  \sigma_i(x)  W_{\Delta}^i  \bigg{]} ,
\label{LimIto}
\end{eqnarray}
where $W_t \in \mathbb{R}^d$ is a Brownian motion \textcolor{black}{and $\Delta \in \mathbb{R}_{+}$ is the step-size}. \textcolor{black}{The convergence rate for (\ref{LimIto}) is $\mathcal{O}(\sqrt{ \Delta})$.}\footnote{\textcolor{black}{Let $f$ be a three-times differentiable function in $x$ with bounded third-order derivatives in $x$. Then, it directly follows from a Taylor expansion that $\bigg{|} \sum_{i,j=1}^d \rho_{i,j} \sigma_{i}(x) \sigma_j(x)  \frac{\partial^2 f}{\partial x_i x_j} (t,x; \theta) -  \mathbb{E} \bigg{[} \sum_{i=1}^d  \frac{ \frac{\partial f}{\partial x_i}(t, x + \sigma(x)W_{\Delta}; \theta ) - \frac{\partial f}{\partial x_i}(t,x; \theta)  }{\Delta}  \sigma_i(x)  W_{\Delta}^i  \bigg{]}  \bigg{|} \leq C(x) \sqrt{ \Delta} $. The constant $C(x)$ depends upon $\rho, f_{xxx}(t,x; \theta)$ and $\sigma(x)$.}} Define:
\begin{eqnarray*}
G_1(\theta_n , s_n ) &\textcolor{black}{\vcentcolon =}&   \bigg{(} \frac{\partial f}{\partial t}(t_n ,x_n; \theta_n) +  \mathcal{L} f(t_n,x_n; \theta_n) \bigg{)}^2, \notag \\
G_2(\theta_n , s_n ) &\textcolor{black}{\vcentcolon =}&   \bigg{(} f(\tau_n ,z_n ; \theta_n ) - g(\tau_n ,z_n) \bigg{)}^2 , \notag \\
G_3(\theta_n , s_n )  &\textcolor{black}{\vcentcolon =}&  \bigg{(}  f(0, w_n; \theta_n) - u_0(w_n) \bigg{)}^2, \notag \\
G(\theta_n , s_n )  &\textcolor{black}{\vcentcolon =}& G_1(\theta_n , s_n )  + G_2(\theta_n , s_n )  + G_3(\theta_n , s_n ) .
\end{eqnarray*}
The DGM algorithm use the gradient $\nabla_{\theta} G_1 (\theta_n , s_n ) $, which requires the calculation of the second derivative terms in $ \mathcal{L} f(t_n,x_n; \theta_n)$.  Define the first derivative operators as
\[
 \mathcal{L}_1 f(t_n,x_n; \theta_n)   \textcolor{black}{ \vcentcolon =}   \mathcal{L} f(t_n,x_n; \theta_n) - \frac{1}{2} \sum_{i,j=1}^d \rho_{i,j} \sigma_{i}(x_n) \sigma_j(x_n)  \frac{\partial^2 f}{\partial x_i x_j} (t_n,x_n; \theta).
   \]
  \textcolor{black}{Using (\ref{LimIto}), $\nabla_{\theta} G_1$ is approximated as $\tilde G_1$ with a fixed constant $\Delta > 0$:}
\begin{eqnarray}
\textcolor{black}{ \tilde G_1(\theta_n, s_n) } &\textcolor{black}{\vcentcolon =}& \textcolor{black}{2} \bigg{(} \frac{\partial f}{\partial t}(t_n ,x_n; \theta_n) +  \mathcal{L}_1 f(t_n,x_n; \theta_n)  +  \frac{1}{2} \sum_{i=1}^d  \frac{ \frac{\partial f}{\partial x_i}(t, x_n + \sigma(x_n) W_{\Delta}; \theta ) - \frac{\partial f}{\partial x_i}(t,x_n; \theta)  }{\Delta}   \sigma_i(x_n)   W_{\Delta}^i \bigg{)} \notag \\
& \times & \nabla_{\theta} \bigg{(} \frac{\partial f}{\partial t} (t_n ,x_n; \theta_n) +  \mathcal{L}_1 f(t_n,x_n; \theta_n)  + \frac{1}{2} \sum_{i=1}^d  \frac{ \frac{\partial f}{\partial x_i}(t, x_n + \sigma(x_n) \tilde W_{\Delta}; \theta ) - \frac{\partial f}{\partial x_i}(t,x_n; \theta)  }{\Delta}    \sigma_i(x_n)  \tilde W_{\Delta}^i  \bigg{)}, \notag
\end{eqnarray}
where $ W_{\Delta}$ is a $d$-dimensional normal random variable with $\mathbb{E}[W_{\Delta} ] = 0$ and $\textrm{Cov}[ ( W_{\Delta} )_i, ( W_{\Delta} )_j ] = \rho_{i,j} \Delta  $.  $\tilde W_{\Delta}$ has the same distribution as $W_{\Delta}$.  $W_{\Delta}$ and $\tilde W_{\Delta}$ are independent.  $\tilde G_1(\theta_n, s_n)$ is a Monte Carlo approximation of $\nabla_{\theta} G_1(\theta_n, s_n)$.  $\tilde G_1(\theta_n, s_n)$ has $\mathcal{O}(\sqrt{\Delta})$ bias as an approximation for   $\nabla_{\theta} G_1(\theta_n, s_n)$.  This approximation error can be further improved via the following scheme using ``antithetic variates":

\begin{eqnarray}
 \tilde G_1(\theta_n, s_n)  &\textcolor{black}{\vcentcolon =}&   \tilde G_{1,a}(\theta_n, s_n)  +  \tilde G_{1,b}(\theta_n, s_n)   \label{tildeJ1OrderDelta} \\
 \tilde G_{1,a} (\theta_n, s_n) &\textcolor{black}{\vcentcolon =}&  \bigg{(} \frac{\partial f}{\partial t}(t_n ,x_n; \theta_n) +  \mathcal{L}_1 f(t_n,x_n; \theta_n)  +  \frac{1}{2} \sum_{i=1}^d \frac{ \frac{\partial f}{\partial x_i}(t, x_n + \sigma(x_n)  W_{\Delta}; \theta ) - \frac{\partial f}{\partial x_i}(t,x_n; \theta)  }{\Delta}   \sigma_i(x_n)   W_{\Delta}^i \bigg{)} \notag \\
& \times & \nabla_{\theta} \bigg{(}  \frac{\partial f}{\partial t} (t_n ,x_n; \theta_n) +  \mathcal{L}_1 f(t_n,x_n; \theta_n)  + \frac{1}{2} \sum_{i=1}^d \frac{ \frac{\partial f}{\partial x_i}(t, x_n + \sigma(x_n) \tilde W_{\Delta}; \theta ) - \frac{\partial f}{\partial x_i}(t,x_n; \theta)  }{\Delta}   \sigma_i(x_n) \tilde   W_{\Delta}^i  \bigg{)}, \notag \\
\tilde G_{1,b} (\theta_n, s_n) &\textcolor{black}{\vcentcolon =}&    \bigg{(} \frac{\partial f}{\partial t}(t_n ,x_n; \theta_n) +  \mathcal{L}_1 f(t_n,x_n; \theta_n)  - \frac{1}{2} \sum_{i=1}^d  \frac{ \frac{\partial f}{\partial x_i}(t, x_n - \sigma(x_n)  W_{\Delta}; \theta ) - \frac{\partial f}{\partial x_i}(t,x_n; \theta)  }{\Delta}   \sigma_i(x_n)  W_{\Delta}^i \bigg{)} \notag \\
& \times & \nabla_{\theta}  \bigg{(} \frac{\partial f}{\partial t} (t_n ,x_n; \theta_n) +   \mathcal{L}_1 f(t_n,x_n; \theta_n)  - \frac{1}{2}  \sum_{i=1}^d \frac{ \frac{\partial f}{\partial x_i}(t, x_n - \sigma(x_n) \tilde W_{\Delta}; \theta ) - \frac{\partial f}{\partial x_i}(t,x_n; \theta)  }{\Delta}   \sigma_i(x_n)  \tilde  W_{\Delta}^i  \bigg{)}. \notag  
\end{eqnarray}
\textcolor{black}{The approximation (\ref{tildeJ1OrderDelta}) has $\mathcal{O}( \Delta )$ bias as an approximation for $ \nabla_{\theta} G_1(\theta_n, s_n)$. (\ref{tildeJ1OrderDelta})  uses antithetic variates in the sense that $ \tilde G_{1,a}(\theta_n, s_n)$ uses the random variables $(W_{\Delta}, \tilde W_{\Delta})$ while $\tilde G_{1,b} (\theta_n, s_n) $ uses $(-W_{\Delta}, -\tilde W_{\Delta})$. See \cite{Glynn} for a background on antithetic variates in simulation algorithms. A Taylor expansion can be used to show the approximation error is $\mathcal{O}( \Delta )$.} It is important to highlight that there is no computational cost associated with the magnitude of $\Delta$; an arbitrarily small $\Delta$ can be chosen with no additional computational cost (although there may be numerical underflow or overflow problems). The modified algorithm using the Monte Carlo approximation for the second derivatives is:

\begin{enumerate}
\item Generate random points $(t_n, x_n)$ from $[0,T] \times \Omega$ and $(\tau_n, z_n)$ from $[0,T] \times \partial \Omega$ according to respective densities $\nu_1$ and $\nu_2$.  Also, draw the random point $w_n$ from $\Omega$ with density $\nu_3$.
\item \textcolor{black}{Calculate the step $ \tilde G(\theta_n ,s_n) =  \tilde G_1(\theta_n , s_n )  + \nabla_{\theta} G_2(\theta_n , s_n )  + \nabla_{\theta} G_3(\theta_n , s_n )$ at the randomly sampled points $s_n = \{ (t_n, x_n), ( \tau_n, z_n), w_n \}$. $\tilde G(\theta_n ,s_n)$ is an approximation for $\nabla_{\theta} G(\theta_n ,s_n)$.}
\item Take a step at the random point $s_n$:
\begin{eqnarray}
\theta_{n+1} = \theta_n - \alpha_n \tilde G(\theta_n, s_n)  \notag
\end{eqnarray}
\item Repeat until convergence criterion is satisfied.
\end{enumerate}

In conclusion, the modified algorithm here is computationally less expensive than the original algorithm in Section \ref{Algorithm} but introduces some bias and variance. \textcolor{black}{The variance essentially increases the i.i.d. noise in the stochastic gradient descent step; this noise averages out over a large number of samples though.} The original algorithm in Section \ref{Algorithm} is unbiased and has lower variance, but is computationally more expensive. We numerically implement the algorithm for a class of free boundary PDEs in Section \ref{NumericalAnalysis}. \textcolor{black}{Future research may investigate other methods to further improve the computational evaluation of the second derivative terms (for instance, multi-level Monte Carlo).}


\section{Numerical Analysis for a High-dimensional Free Boundary PDE} \label{NumericalAnalysis}
We test our algorithm on a class of high-dimensional free boundary PDEs. These free boundary PDEs are used in finance to price American options and are often referred to as ``American option PDEs". An American option is a financial derivative on a portfolio of stocks. The option owner may at any time $t \in [0,T]$ choose to exercise the American option and receive a payoff which is determined by the underlying prices of the stocks in the portfolio. $T$ is called the maturity date of the option and the payoff function is $g(x): \mathbb{R}^d \rightarrow \mathbb{R}$. Let $X_t \in \mathbb{R}^d$ be the prices of $d$ stocks.
If at time $t$ the stock prices $X_t = x$, the price of the option is $u(t,x)$.  The price function $u(t,x)$ satisfies a free boundary PDE on $[0,T] \times \mathbb{R}^d$. For American options, one is primarily interested in the solution $u(0, X_0)$ since this is the fair price to buy or sell the option.

Besides the high dimensions and the free boundary, the American option PDE is challenging to numerically solve since the payoff function $g(x)$ (which both appears in the initial condition and determines the free boundary) is not continuously differentiable.

Section \ref{StatementofPDE} states the free boundary PDE and the deep learning algorithm to solve it. \textcolor{black}{To address the free boundary, we supplement the algorithm presented in Section \ref{Algorithm} with an iterative method;  see Section \ref{StatementofPDE}.} Section \ref{Implementation} describes the architecture and implementation details for the neural network. Section \ref{SemiAnalytic} reports numerical accuracy for a case where a semi-analytic solution exists.  Section \ref{NoSemiAnalytic} reports numerical accuracy for a case where no semi-analytic solution exists.


\subsection{The Free Boundary PDE} \label{StatementofPDE}
We now specify the free boundary PDE for $u(t,x)$. The stock price dynamics and option price are:
\begin{eqnarray*}
d X_t^i &=&  \mu(X_t^i) dt +\sigma(X_t^i) d W_t^i, \notag \\
u(t,x) &=&  \sup_{\tau \geq t} \mathbb{E}[ e^{- r( \tau \wedge T)} g(X_{\tau \wedge T}) | X_t = x],
\label{SDEx}
\end{eqnarray*}
where $W_t \in \mathbb{R}^d$ is a standard Brownian motion and $\textrm{Cov}[ dW_t^i, dW_t^j] = \rho_{i,j} dt$.  The price of the American option is $u(0, X_0)$.

\textcolor{black}{The model (\ref{SDEx}) for the stock price dynamics is widely used in practice and captures several desirable characteristics. First, the drift $\mu(x)$ measures the ``average" growth in the stock prices. The Brownian motion $W_t$ represents the randomness in the stock price, and the magnitude of the randomness is given by the coefficient function $\sigma(X_t^i)$. The movement of stock prices are correlated (e.g., if Microsoft's price increases, it is likely that Apple's price will also increase). The magnitude of the correlation between two stocks $i$ and $j$ is specified by the parameter $\rho_{i,j}$. An example is the well-known Black-Scholes model $\mu(x) = \mu x$ and $\sigma(x) = \sigma x$. In the Black-Scholes model, the average rate of return for each stock is $\mu$.}

\textcolor{black}{An American option is a financial derivative which the owner can choose to ``exercise" at any time $t \in [0,T]$. If the owner exercises the option, they receive the financial payoff $g(X_t)$ where $X_t$ is the prices of the underlying stocks. If the owner does not choose to exercise the option, they receive the payoff $g(X_T)$ at the final time $T$. The value (or price) of the American option at time $t$ is $u(t,X_t)$.} \textcolor{black}{Some typical examples of the payoff function $g(x): \mathbb{R}^d \rightarrow \mathbb{R}$ are $g(x)=\max \big{(} (\prod_{i=1}^d x_i )^{1/d}- K, 0 \big{)}$ and $ g(x)=\max \big{(} \frac{1}{d} \sum_{i=1}^d x_i - K, 0 \big{)}$. The former is referred to as a ``geometric payoff function" while the latter is called an ``arithmetic payoff function."} \textcolor{black}{$K$ is the ``strike price" and is a positive number.}

The price function $u(t,x)$ in (\ref{SDEx}) is the solution to a free boundary PDE and will satisfy:
\begin{eqnarray}
0 &=& \frac{ \partial u}{\partial t}(t,x)  + \mu(x) \cdot \frac{ \partial u}{\partial x}(t,x)  + \frac{1}{2} \sum_{i,j =1}^d \rho_{i,j} \sigma(x_i) \sigma(x_j) \frac{\partial^2 u}{\partial x_i \partial x_j }(t,x) -r u(t,x),  \phantom{....}  \forall \phantom{..} \big{\{} (t,x) : u(t,x) > g(x) \big{\}}.  \notag \\
u(t,x) & \geq &  g(x),  \phantom{....}  \forall \phantom{..} (t,x).  \notag \\
u(t,x) &\in& C^1(\mathbb{R}_{+} \times \mathbb{R}^d), \phantom{....}  \forall \phantom{..} \big{\{} (t,x) : u(t,x) = g(x) \big{\}}. \notag \\
u(T,x) &=&  g(x),  \phantom{....}  \forall \phantom{..} x.
\label{AmericanOptionPDE}
\end{eqnarray}
The free boundary set is $F = \big{\{} (t,x) : u(t,x) = g(x) \big{\}}$.  $u(t,x)$ satisfies a partial differential equation ``above" the free boundary set $F$, and $u(t,x)$ equals the function $g(x)$ ``below" the free boundary set $F$.

The deep learning algorithm for solving the PDE (\ref{AmericanOptionPDE}) requires simulating points above and below the free boundary set $F$.  We use an iterative method to address the free boundary. The free boundary set $F$ is approximated using the current parameter estimate $\theta_{n}$. This approximate free boundary is used in the probability measure that we simulate points with. The gradient is not taken with respect to the $\theta_n$ input of the probability density used to simulate random points.  For this purpose, define the objective function:
\begin{eqnarray*}
J(f;\theta, \tilde \theta)  &=&   \norm{ \frac{ \partial f}{\partial t}(t,x; \theta)  + \mu(x) \cdot \frac{ \partial f}{\partial x}(t,x; \theta)  + \frac{1}{2} \sum_{i,j =1}^d \rho_{i,j} \sigma(x_i) \sigma(x_j) \frac{\partial^2 f}{\partial x_i \partial x_j }(t,x; \theta) -r f(t,x; \theta) }_{[0,T] \times \Omega, \nu_1(\tilde \theta) }^2 \notag \\
&+&   \norm{  \max ( g(x) - f(t,x; \theta), 0 ) }_{ [0,T] \times \Omega, \nu_2(\tilde \theta) }^2  \notag \\
&+& \norm{  f(T,x; \theta ) - g(x) }_{ \Omega, \nu_3 }^2.
\label{ObjectiveAmericanOptionPDE}
\end{eqnarray*}
Descent steps are taken in the direction $-\nabla_{\theta} J( f;\theta, \tilde \theta)$.  $\nu_1(\tilde \theta)$ and $\nu_2(\tilde \theta)$ are the densities of the points in $\tilde B^1$ and $\tilde B^2$, which are defined below.  The deep learning algorithm is:

\begin{enumerate} \label{BatchDGM}
\item Generate the random batch of points $B^1 = \{ t_m, x_m \}_{m=1}^M$ from $[0,T] \times \Omega$ according to the probability density $\nu_1^0$.  Select the points $\tilde B^1 = \{ (t,x) \in B^1 : f(t,x; \theta_n) > g(x) \}.$
\item Generate the random batch of points $B^2 = \{ \tau_m, z_m \}_{m=1}^M$ from $[0,T] \times \partial \Omega$ according to the probability density $\nu_2^0$.  Select the points $\tilde B^2 = \{ (\tau,z) \in B^2 : f(\tau, z ;  \theta_n) \leq g(z) \}.$
\item Generate the random batch of points $B^3 = \{ w_m \}_{m=1}^M$ from $\Omega$ with probability density $\nu_3$.
\item Approximate $J(f;\theta_n, \tilde \theta_n)$ as $J(f;\theta_n, S_n)$ at the randomly sampled points $S_n = \{ \tilde B^1, \tilde B^2, B^3 \}$:
\begin{eqnarray}
J(f;\theta_n , S_n) &=&  \frac{1}{|\tilde B^1 | } \sum_{(t_m, x_m) \in \tilde B^1 } \bigg{(} \frac{ \partial f}{\partial t}(t_m,x_m; \theta_n)  + \mu(x_m) \cdot \frac{ \partial f}{\partial x}(t_m,x_m; \theta_n)   \notag \\
&+& \frac{1}{2} \sum_{i,j =1}^d \rho_{i,j} \sigma(x_i) \sigma(x_j) \frac{\partial^2 f}{\partial x_i \partial x_j }(t_m,x_m; \theta_n) -r f(t_m,x_m; \theta_n) \bigg{)}^2  \notag \\
&+&    \frac{1}{ | \tilde B^2 | } \sum_{(\tau_m, z_m) \in \tilde B^2 } \max \big{(} g(z_m) - f(\tau_m,z_m; \theta_n), 0 \big{)}^2 \notag \\
&+&     \frac{1}{| B^3 |} \sum_{ w_m \in B^3}  \bigg{(} f(T,w_m; \theta ) - g(w_m)  \bigg{)}^2. \notag
\end{eqnarray}
\item Take a descent step for the random batch $S_n$:
\begin{eqnarray}
\theta_{n+1} = \theta_n - \alpha_n \nabla_{\theta} J(f;\theta_n, S_n). \notag
\end{eqnarray}
\item Repeat until convergence criterion is satisfied.

\end{enumerate}
\textcolor{black}{The second derivatives in the above algorithm can be approximated using the method from Section \ref{ModifiedAlgorithm}.}

\subsection{\textcolor{black}{Implementation details for the algorithm}} \label{Implementation}
\textcolor{black}{This section provides details for the implementation of the algorithm, including the DGM network architecture, hyperparameters, and computational approach.}

\textcolor{black}{The architecture of a neural network can be crucial to its success. Frequently, different applications require different architectures. For example, convolution networks are essential for image recognition while long short-term networks (LSTMs) are useful for modeling sequential data. Clever choices of architectures, which exploit a priori knowledge about an application, can significantly improve performance. In the PDE applications in this paper, we found that a neural network architecture similar in spirit to that of LSTM networks improved performance.}

\textcolor{black}{The PDE solution requires a model $f(t, x; \theta)$ which can make ``sharp turns" due to the final condition, which is of the form $u(T,x) = \max( p(x), 0)$ (the first derivative is discontinuous when $p(x) = 0$). The shape of the solution $u(t,x)$ for $t < T$, although ``smoothed" by the diffusion term in the PDE, will still have a nonlinear profile which is rapidly changing in certain spatial regions. In particular, we found the following network architecture to be effective:}

\textcolor{black}{\begin{eqnarray}
S^{1} &=&  \sigma( W^1 \overset{\rightarrow} x + b^1 ), \notag \\
Z^{\ell} &=&  \sigma ( U^{z,\ell} \overset{\rightarrow} x  + W^{z,\ell} S^{\ell} + b^{z,\ell} ), \phantom{....} \ell = 1, \ldots, L, \notag \\
G^{\ell} &=&  \sigma ( U^{g,\ell} \overset{\rightarrow} x  + W^{g,\ell} S^1 + b^{g,\ell} ), \phantom{....} \ell = 1, \ldots, L, \notag \\
R^{\ell} &=&  \sigma ( U^{r,\ell} \overset{\rightarrow} x  + W^{r,\ell} S^{\ell} + b^{r, \ell} ), \phantom{....} \ell = 1, \ldots, L, \notag \\
H^{\ell} &=&  \sigma ( U^{h,\ell} \overset{\rightarrow} x  + W^{h,\ell}  ( S^{\ell} \odot R^{\ell} ) + b^{h, \ell} ), \phantom{....} \ell = 1, \ldots, L, \notag \\
S^{\ell+1} &=&  (1 - G^{\ell} ) \odot H^{\ell} + Z^{\ell} \odot S^{\ell}, \phantom{....} \ell = 1, \ldots, L, \notag \\
f(t,x; \theta) &=& W S^{L+1} + b,
\label{ArchitectureNeuralNetwork}
\end{eqnarray}
where $\overset{\rightarrow} x  = (t,x)$, \textcolor{black}{the number of hidden layers is $L+1$}, and $\odot$ denotes element-wise multiplication (i.e., $z \odot v = \big{(} z_0 v_0, \ldots, z_N v_N \big{)}$). The parameters are
\begin{eqnarray}
\theta = \bigg{\{} W^1, b^1, \bigg{(} U^{z, \ell}, W^{z, \ell}, b^{z, \ell} \bigg{)}_{\ell=1}^L,  \bigg{(} U^{g, \ell}, W^{g, \ell}, b^{g, \ell} \bigg{)}_{\ell=1}^L, \bigg{(} U^{r, \ell}, W^{r, \ell}, b^{r, \ell} \bigg{)}_{\ell=1}^L, \bigg{(} U^{h, \ell}, W^{h, \ell}, b^{h, \ell} \bigg{)}_{\ell=1}^L, W, b \bigg{\}}. \notag
\end{eqnarray}
The number of units in each layer is $M$ and $\sigma: \mathbb{R}^M \rightarrow \mathbb{R}^M$ is an element-wise nonlinearity:
\begin{eqnarray}
\sigma(z) = \bigg{(} \phi(z_1), \phi(z_2), \ldots, \phi(z_M) \bigg{)},
\end{eqnarray}
where $\phi: \mathbb{R} \rightarrow \mathbb{R}$ is a nonlinear activation function such as the tanh function, sigmoidal function $\frac{e^y}{1 + e^y}$, or rectified linear unit (ReLU) $\max(y,0)$. The parameters in $\theta$ have dimensions $W^1 \in \mathbb{R}^{M \times (d+1)}$, $b^1 \in \mathbb{R}^M$, $U^{z,\ell} \in \mathbb{R}^{M \times (d +1)}$, $W^{z, \ell} \in \mathbb{R}^{M \times M}$, $b^{z, \ell} \in \mathbb{R}^M$, $U^{g,\ell} \in \mathbb{R}^{M \times (d +1)}$, $W^{g, \ell} \in \mathbb{R}^{M \times M}$, $b^{g, \ell} \in \mathbb{R}^M$, $U^{r,\ell} \in \mathbb{R}^{M \times (d +1)}$, $W^{r, \ell} \in \mathbb{R}^{M \times M}$, $b^{r, \ell} \in \mathbb{R}^M$, $U^{h, \ell} \in \mathbb{R}^{M \times (d +1)}$, $W^{h, \ell} \in \mathbb{R}^{M \times M}$, $b^{h, \ell} \in \mathbb{R}^M$, $W \in \mathbb{R}^{1 \times M}$, and $b \in \mathbb{R}$.}

\textcolor{black}{The architecture (\ref{ArchitectureNeuralNetwork}) is relatively complicated. Within each layer, there are actually many ``sub-layers" of computations. The important feature is the repeated element-wise multiplication of nonlinear functions of the input. This helps to model more complicated functions which are rapidly changing in certain time and space regions.  The neural network architecture (\ref{ArchitectureNeuralNetwork}) is similar to the architecture for LSTM networks (see \cite{SchmidhuberLSTM}) and highway networks (see \cite{SchmidhuberHighway}). }

\textcolor{black}{The key hyperparameters in the neural network (\ref{ArchitectureNeuralNetwork}) are the number of layers $L$, the number of units $M$ in each sub-layer, and the choice of activation unit $\phi(y)$. We found for the applications in this paper that the hyperparameters $L = 3$ \textcolor{black}{(i.e., four hidden layers)}, $M = 50$, and $\phi(y) = \tanh(y)$ were effective. It is worthwhile to note that the choice of $\phi(y) = \tanh(y)$ means that $f(t,x; \theta)$ is smooth and therefore can solve for a ``classical solution" of the PDE. The neural network parameters are initialized using the Xavier initialization (see \cite{Xavier}). The architecture (\ref{ArchitectureNeuralNetwork}) is bounded in the input $x$ (for a fixed choice of parameters $\theta$) if $\sigma(\cdot)$ is a tanh or sigmoidal function; it may be helpful to allow the network to be unbounded for approximating unbounded/growing functions. We found that replacing the $\sigma(\cdot)$ in the $H^{\ell}$ sub-layer with the identity function can be an effective way to develop an unbounded network.}

\textcolor{black}{We emphasize that the only input to the network is $(t,x)$. We do not use any custom-designed nonlinear transformations of $(t,x)$. If properly chosen, such additional inputs might help performance. For example, the European option PDE solution (which has an analytic formula) could be included as an input.}

\textcolor{black}{A regularization term (such as an $\ell^2$ penalty) could also be included in the objective function for the algorithm. Such regularization terms are used for reducing overfitting in machine learning models estimated using datasets which have a limited number of data samples. (For example, a model estimated on a dataset of $60,000$ images.) However, it's unclear if this will be helpful in the context of this paper's application, since there is no strict upper bound on the size of the dataset (i.e., one can always simulate more time/space points).}

\textcolor{black}{Our computational approach to training the neural network involved several components. The second derivatives are approximated using the method from Section \ref{ModifiedAlgorithm}. Training is distributed across $6$ GPU nodes using asynchronous stochastic gradient descent (we provide more details on this below). Parameters are updated using the well-known ADAM algorithm (see \cite{ADAM}) with a decaying learning rate schedule (more details on the learning rate are provided below).  Accuracy can be improved by calculating a running average of the neural network solutions over a sequence of training iterations (essentially a computationally cheap approach for building a model ensemble). We also found that model ensembles (of even small sizes of 5) can slightly increase accuracy.}

\textcolor{black}{Training of the neural network is distributed across several GPU nodes in order to accelerate training. We use asynchronous stochastic gradient descent, which is a widely-used method for parallelizing training of machine learning models. On each node, i.i.d. space and time samples are generated. Each node calculates the gradient of the objective function with respect to the parameters on its respective batch of simulated data. These gradients are then used to update the model, which is stored on a central node called a ``parameter server". Figure \ref{ASGDjcp} displays the computational setup. Updates occur asynchronously; that is, node $i$ updates the model immediately upon completion of its work, and does not wait for node $j$ to finish its work. The ``work" here is calculating the gradients for a batch of simulated data. Before a node calculates the gradient for a new batch of simulated data, it receives an updated model from the parameter server. For more details on asynchronous stochastic gradient descent, see \cite{Dean}.}

\begin{figure}[h!]
\begin{center}
\includegraphics[width=.8\textwidth, height=40mm]{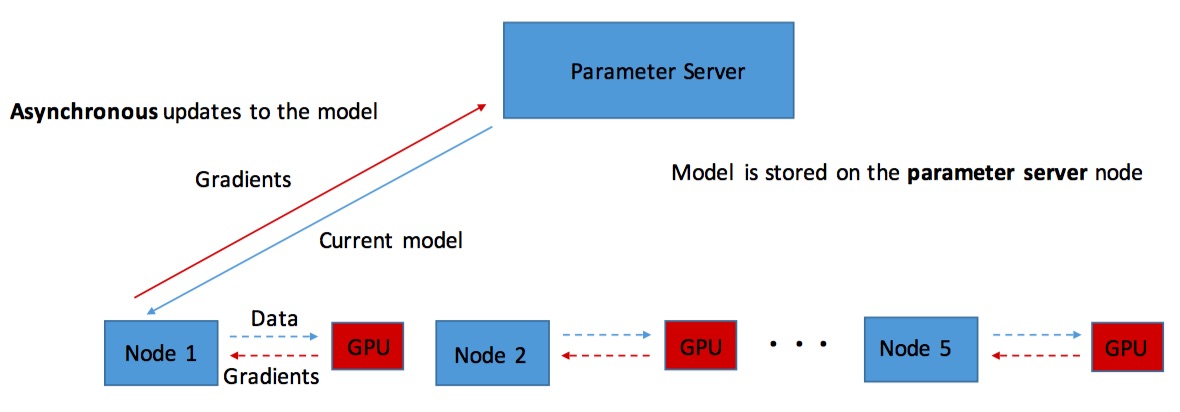}
\end{center}
\caption{Asynchronous stochastic gradient descent on a cluster of GPU nodes.}
\label{ASGDjcp}
\end{figure}

\textcolor{black}{During training, we decrease the learning as the number of iterations increases. We use a learning rate schedule where the learning rate is a piecewise constant function of the number of iterations. This is a typical choice. We found the following learning rate schedule to be effective:}

 \textcolor{black}{\[   \alpha_n = \left\{
\begin{array}{ll}
      10^{-4} & n \leq 5,000 \\
       5 \times 10^{-4}  &   5,000 < n \leq 10,000 \\
       10^{-5} &  <  10,000 < n \leq 20,000 \\
      5 \times 10^{-6} & 20,000 < n \leq 30,000  \\
      10^{-6}  & 30,000 < n \leq 40,000  \\
      5 \times 10^{-7}  & 40,000 < n \leq 45,000  \\
      10^{-7}  & 45,000 <  n \\
\end{array}
\right. \]}

\textcolor{black}{We use approximately $100,000$ iterations. An ``iteration" involves batches of size $1,000$ on each of the GPU nodes. Therefore, there are $5,000$ simulated time/space points for each iteration. In total, we used approximately $500$ million simulated time/space points to train the neural network.}

\textcolor{black}{We implement the algorithm using TensorFlow and PyTorch, which are software libraries for deep learning. TensorFlow has reverse mode automatic differentiation which allows the calculation of derivatives for a broad range of functions. For example, TensorFlow can be used to calculate the gradient of the neural network (\ref{ArchitectureNeuralNetwork}) with respect to $x$ or $\theta$. TensorFlow also allows for the training of models on graphics processing units (GPUs). A GPU, which has thousands of cores, can be use to highly parallelize the training of deep learning models. We furthermore distribute our computations across multiple GPU nodes, as described above. The computations in this paper were performed on the Blue Waters supercomputer which has a large number of GPU nodes. }


\subsection{A High-dimensional Free Boundary PDE  with a Semi-Analytic Solution } \label{SemiAnalytic}

We implement our deep learning algorithm to solve the PDE (\ref{AmericanOptionPDE}).  The accuracy of our deep learning algorithm is evaluated in up to $200$ dimensions. The results are reported below in Table \ref{AmericanOptionTable1}.

\begin{table}[ht!]
\begin{center}
 \begin{tabular}{  | c  | c | }
   \hline
Number of dimensions  &   Error  \\ \hline \hline
 3               &  0.05\%    \\ \hline
 20                  &   0.03\%    \\ \hline
 100                &   0.11\%    \\ \hline
  200                &   0.22\%    \\ \hline
 \end{tabular}
\end{center}
\caption{\label{AmericanOptionTable1} The deep learning algorithm solution is compared with a semi-analytic solution for the Black-Scholes model. The parameters $\mu(x) = (r -c) x$ and $\sigma(x) = \sigma x$.  All stocks are identical with correlation $\rho_{i,j} = .75$, volatility $\sigma = .25$, initial stock price $X_0 = 1$, dividend rate $c = 0.02$, and interest rate $r = 0$.  The maturity of the option is $T =2$ and the strike price is $K=1$.  The payoff function is $ g(x)=\max \big{(} (\prod_{i=1}^d x_i )^{1/d}- K, 0 \big{)}$.  The error is reported for the price $u(0, X_0)$ of the at-the-money American call option.  The error is $\frac{ | f(0, X_0; \theta) - u(0, X_0) |}{|u(0, X_0)| } \times 100\%$.}
\end{table}

\textcolor{black}{The semi-analytic solution used in Table \ref{AmericanOptionTable1} is provided below. Let $\mu(x) = (r -c) x$, $\sigma(x) = \sigma x$, and $\rho_{i,j} = \rho$ for $i \neq j$ (i.e., the Black-Scholes model). If the payoff function in (\ref{AmericanOptionPDE}) is $ g(x)=\max \big{(} (\prod_{i=1}^d x_i )^{1/d}- K, 0 \big{)}$, then there is a semi-analytic solution to (\ref{AmericanOptionPDE}):
\begin{eqnarray}
u(t,x) &=& v(t, (\prod_{i=1}^d x_i )^{1/d}- K ),
\label{SemiAnalyticFormula}
\end{eqnarray}
where $v(t,x)$ satisfies the one-dimensional free boundary PDE
\begin{eqnarray}
0 &=& \frac{ \partial v}{\partial t}(t,x)  + \hat \mu x \frac{ \partial v}{\partial x}(t,x)  + \frac{1}{2} \hat \sigma^2 x \frac{\partial^2 v}{\partial x^2}(t,x) -r v(t,x),  \phantom{....}  \forall \phantom{..} \big{\{} (t,x) : v(t,x) > \hat g(x) \big{\}}.  \notag \\
v(t,x) & \geq &  \hat g(x),  \phantom{....}  \forall \phantom{..} (t,x).  \notag \\
\hat v(t,x) &\in& C^1(\mathbb{R}_{+} \times \mathbb{R}^d), \phantom{....}  \forall \phantom{..} \big{\{} (t,x) : v(t,x) = \hat g(x) \big{\}}. \notag \\
v(T,x) &=&  \hat g(x),  \phantom{....}  \forall \phantom{..} x,
\label{AmericanOptionPDEvOneD}
\end{eqnarray}
where $\hat \sigma^2 = \frac{ d \sigma^2 + d(d-1) \rho \sigma^2 }{d^2}$, $\hat \mu = (r-c) - \frac{1}{2} \hat \sigma^2 + \frac{1}{2} \sigma^2$, and $\hat g(x) = \max(x, 0)$. The one-dimensional PDE (\ref{AmericanOptionPDEvOneD}) can be solved using finite difference methods. If $f(t,x; \theta)$ is the deep learning algorithm's estimate for the PDE solution at $(t,x)$, the relative error at the point
$(t,x)$ is $\frac{ | f(t, x; \theta) - u(t, x) |}{|u(t,x) | } \times 100\%$ and the absolute error at the point $(t,x)$ is $| f(t, x; \theta) - u(t, x) |$. The relative error and absolute error at the point $(t,x)$ can be evaluated using the semi-analytic solution (\ref{SemiAnalyticFormula}).}


Although the solution at $(t,x) = (0, X_0)$ is of primary interest for American options, most other PDE applications are interested in the entire solution $u(t,x)$. The deep learning algorithm provides an approximate solution across all time and space $(t,x) \in [0, T] \times \Omega$. As an example, we present in Figure \ref{SubplotFigureContour} contour plots of the absolute error and percent error across time and space for the American option PDE in $20$ dimensions. The contour plot is produced in the following way:
\begin{enumerate}
\item Sample time points $t^\ell$ uniformly on $[0,T]$ and sample spatial points $x^\ell = (x_1^\ell, \ldots, x_{20}^\ell)$ from the joint distribution of $X_t^1, \ldots, X_t^{20}$ in equation (\ref{SDEx}). This produces an ``envelope" of sampled points since $X_t$ spreads out as a diffusive process from $X_0 = 1$.
\item  Calculate the error $E^\ell$ at each sampled point $(t^\ell, x^\ell)$ for $\ell =1, \ldots, L$.
\item Aggregate the error over a two-dimensional subspace $\big{(} t^\ell, \displaystyle (\prod_{i=1}^{20} x_i^\ell )^{1/20}, E^\ell \big{)}$ for $\ell =1, \ldots, L$.
\item Produce a contour plot from the data $\big{(}  t^\ell, (\prod_{i=1}^{20} x_i^\ell )^{1/20}, E^\ell \big{)}_{\ell=1}^L$. The x-axis is $t$ and the y-axis is the geometric average $(\prod_{i=1}^{20} x_i )^{1/20}$, which corresponds to the final condition $g(x)$.
\end{enumerate}
Figure \ref{SubplotFigureContour} reports both the absolute error and the percent error. The percent error $\frac{ | f(t,x; \theta) - u(t,x) |}{ | u(t,x) | } \times 100\%$ is reported for points where $|u(t,x)| > 0.05$. The absolute error becomes relatively large in a few areas; however, the solution $u(t,x)$ also grows large in these areas and therefore the percent error remains small.

\begin{figure}[h!]
\begin{center}
\includegraphics[width=.7\textwidth, height=80mm]{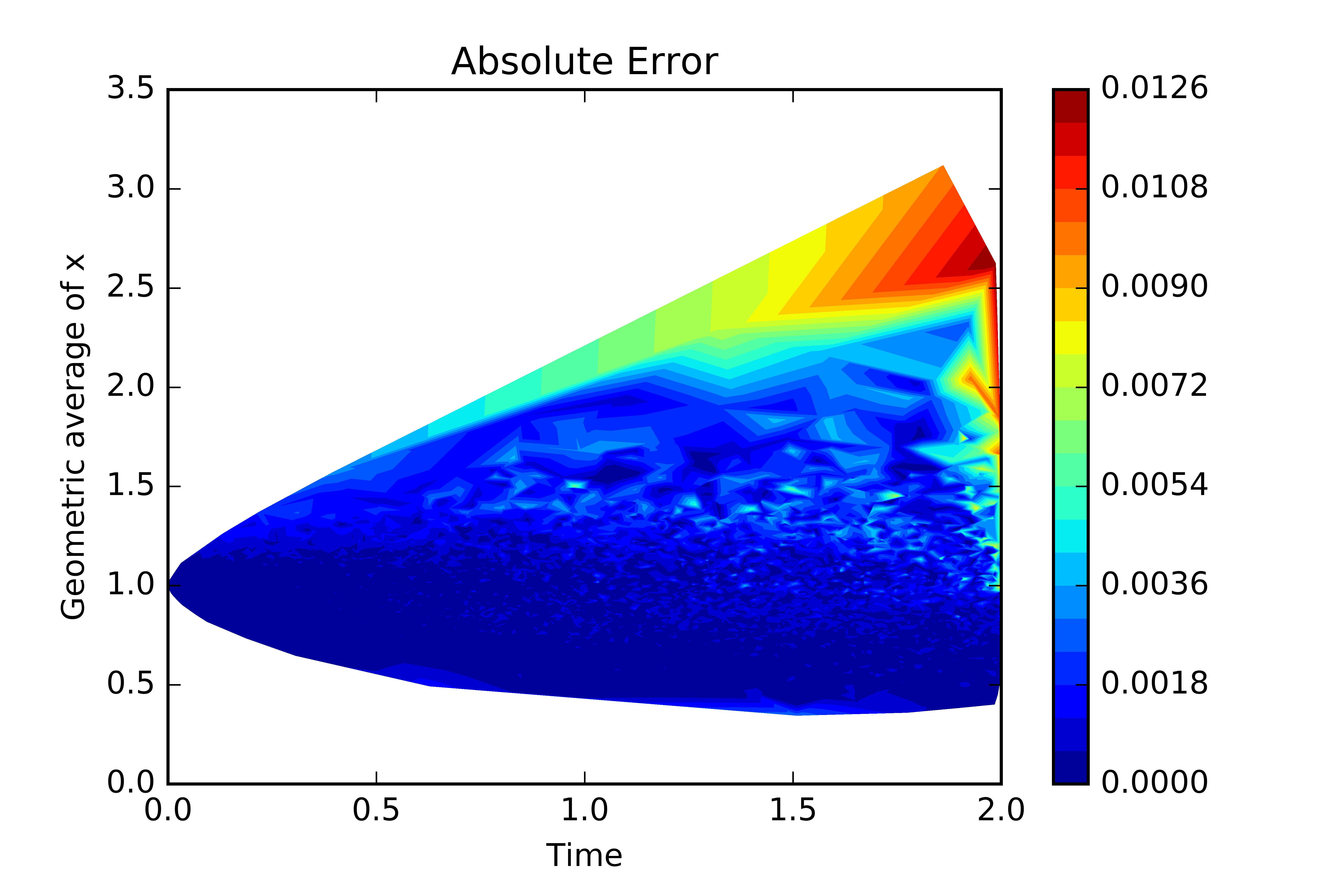}
\includegraphics[width=.7\textwidth, height=80mm]{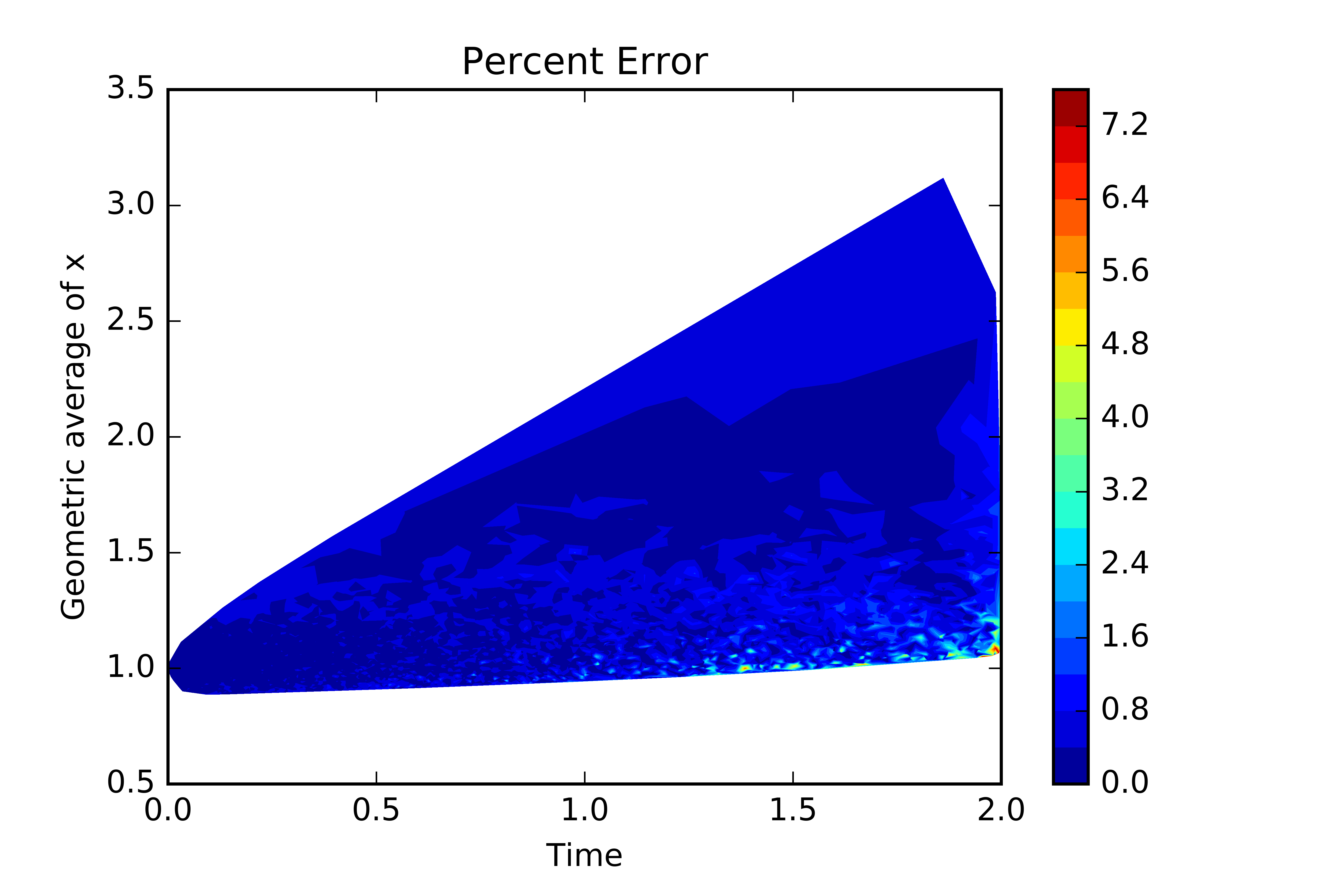}
\end{center}
\caption{Top: Absolute error. Bottom: Percent error. For reference, the price at time $0$ is $0.1003$ and the solution at time $T$ is $\max(\textrm{geometric average of x} - 1, 0 )$.}
\label{SubplotFigureContour}
\end{figure}

%

\subsection{A High-dimensional Free Boundary PDE \emph{without} a Semi-Analytic Solution } \label{NoSemiAnalytic}

We now consider a case of the American option PDE which does not have a semi-analytic solution. The American option PDE has the special property that it is possible to calculate error bounds on an approximate solution. Therefore, we can evaluate the accuracy of the deep learning algorithm even on cases where no semi-analytic solution is available.

We previously only considered a symmetrical case where $\rho_{i,j} = 0.75$ and $\sigma = 0.25$ for all stocks. This section solves a more challenging heterogeneous case where $\rho_{i,j}$ and $\sigma_i$ vary across all dimensions $i =1, 2, \ldots, d$.  The coefficients are fitted to actual data for the stocks IBM, Amazon, Tiffany, Amgen, Bank of America, General Mills, Cisco, Coca-Cola, Comcast, Deere, General Electric, Home Depot, Johnson \& Johnson, Morgan Stanley, Microsoft, Nordstrom, Pfizer, Qualcomm, Starbucks, and Tyson Foods from 2000-2017. This produces a PDE with widely-varying coefficients for each of the $\frac{d^2 +d}{2}$ second derivative terms.  The correlation coefficients $\rho_{i,j}$ range from $-0.53$ to $0.80$ for $i \neq j$ and $\sigma_i$ ranges from $0.09$ to $0.69$.


Let $f(t,x; \theta)$ be the neural network approximation.  \cite{ChrisRogers} derived that the PDE solution $u(t,x)$ lies in the interval:
\begin{eqnarray}
u(t,x) &\in&  \big[ \underline{u}(t,x), \overline{u}(t,x) \big], \notag \\
\underline{u}(t,x) &=& \mathbb{E}  \bigg [ g( X_{\tau} ) | X_t = x, \tau > t  \bigg], \notag \\
\overline{u}(t,x) &=& \mathbb{E} \bigg [  \sup_{s \in [t,T] }  \big[ e^{-r (s-t) } g( X_s)  - M_s \big ] \bigg ].
\label{AmericanOptionBounds}
\end{eqnarray}
where $\tau = \inf \{ t \in [0,T]: f(t, X_t; \theta) < g(X_t ) \}$ and $M_s$ is a martingale constructed from the approximate solution $f(t,x; \theta)$
\begin{eqnarray}
M_s &=& f(s,X_s; \theta) - f(t,X_t; \theta)  - \int_t^s \bigg{[}\frac{ \partial f}{\partial t}(s',X_{s'}; \theta)  + \mu(X_{s'}) \frac{ \partial f}{\partial x}(s',X_{s'}; \theta)  \notag \\
&+& \frac{1}{2} \sum_{i,j =1}^d \sigma(X_{s', i}) \sigma(X_{s', j}) \frac{\partial^2 f}{\partial x_i \partial x_j }(s',X_{s'}; \theta) -r f(s',X_{s'}; \theta) \bigg{]} ds'. 
\label{integralM}
\end{eqnarray}
\textcolor{black}{The bounds (\ref{AmericanOptionBounds})} depend only on the approximation $f(t,x; \theta)$, which is known, and can be evaluated via Monte Carlo simulation. The integral for $M_s$ must also be discretized. The best estimate for the price of the American option is the midpoint of the interval $[\underline{u}(0, X_0), \overline{u}(0,X_0) ]$, which has an error bound of $\frac{ \overline{u}(0,X_0) - \underline{u}(0, X_0)  }{  2  \underline{u}(0, X_0) } \times 100\%$. Numerical results are in Table \ref{AmericanOptionTable2}.

\begin{table}[ht!]
\begin{center}
 \begin{tabular}{  | c  | c | c| c| c| }
   \hline
Strike price &  Neural network solution & Lower Bound & Upper Bound & Error bound \\ \hline \hline
0.90     &  0.14833   & 0.14838    & 0.14905  &  0.23\%       \\ \hline
0.95      &  0.12286  &  0.12270   & 0.12351 &  0.33\%      \\ \hline
1.00      &  0.10136  &   0.10119  & 0.10193  &   0.37\%      \\ \hline
1.05      &  0.08334   & 0.08315   &  0.08389 & 0.44\%       \\ \hline
1.10      & 0.06841    &  0.06809   &  0.06893 &  0.62\%       \\ \hline
 \end{tabular}
\end{center}
\caption{\label{AmericanOptionTable2}  The accuracy of the deep learning algorithm is evaluated on a case where there is no semi-analytic solution.  The parameters $\mu(x) = (r -c) x$ and $\sigma(x) = \sigma x$.  The correlations $\rho_{i,j}$ and volatilities $\sigma_i$ are estimated from data to generate a heterogeneous diffusion matrix. The initial stock price is $X_0 = 1$, dividend rate $c = 0.02$, and interest rate $r = 0$ for all stocks. The maturity of the option is $T =2$. The payoff function is $ g(x)=\max \big{(} \frac{1}{d} \sum_{i=1}^d x_i- K, 0 \big{)}$.  The neural network solution and its error bounds are reported for the price $u(0, X_0)$ of the American call option.  The best estimate for the price of the American option is the midpoint of the interval $[\underline{u}(0, X_0), \overline{u}(0,X_0) ]$, which has an error bound of $\frac{ \overline{u}(0,X_0) - \underline{u}(0, X_0)  }{  2  \underline{u}(0, X_0) } \times 100\%$. In order to calculate the upper bound, the integral (\ref{integralM}) is discretized with time step size $\Delta = 5 \times 10^{-4}$.}
\end{table}

We present in Figure \ref{SubplotFigureContour2} contour plots of the absolute error bound and percent error bound across time and space for the American option PDE in $20$ dimensions for strike price $K = 1$. The contour plot is produced in the following way:
\begin{enumerate}
\item Sample time points $t^\ell$ uniformly on $[0,T]$ and sample spatial points $x^\ell = (x_1^\ell, \ldots, x_{20}^\ell)$ from the joint distribution of $X_t^1, \ldots, X_t^{20}$ in equation (\ref{SDEx}).
\item  Calculate the error $E^\ell$ at each sampled point $(t^\ell, x^\ell)$ for $\ell =1, \ldots, L$.
\item Aggregate the error over a two-dimensional subspace $\big{(} t^\ell,  \displaystyle \frac{1}{20} \sum_{i=1}^{20} x_i^\ell , E^\ell \big{)}$ for $\ell =1, \ldots, L$.
\item Produce a contour plot from the data $\big{(}  t^\ell,  \frac{1}{20} \sum_{i=1}^{20} x_i^\ell, E^\ell \big{)}_{\ell=1}^L$. The x-axis is $t$ and the y-axis is the geometric average $ \frac{1}{20} \sum_{i=1}^{20} x_i^\ell$, which corresponds to the final condition $g(x)$.
\end{enumerate}
Figure \ref{SubplotFigureContour2} reports both the absolute error and the percent error. The percent error $\frac{ | f(t,x; \theta) - u(t,x) |}{ | u(t,x) | } \times 100\%$ is reported for points where $|u(t,x)| > 0.05$. It should be emphasized that these are \emph{error bounds}; therefore, the actual error could be lower. The contour plot \ref{SubplotFigureContour2} requires significant computations. For each point at which calculate an error bound, a new simulation of (\ref{AmericanOptionBounds}) is required. In total, a large number of simulations are required, which we distribute across hundreds of GPUs on the Blue Waters supercomputer.

\begin{figure}[h!]
\begin{center}
\includegraphics[width=.7\textwidth, height=80mm]{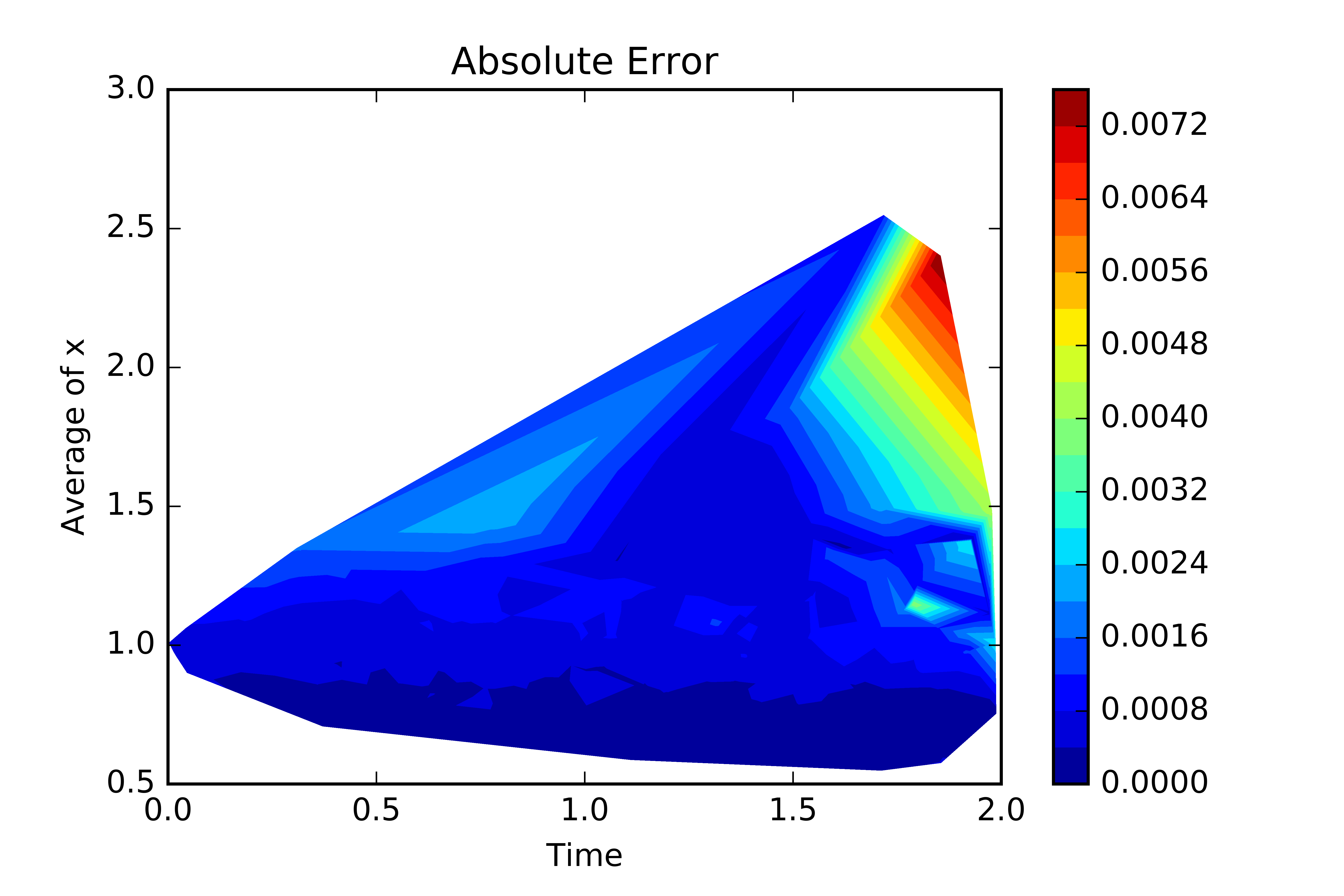}
\includegraphics[width=.7\textwidth, height=80mm]{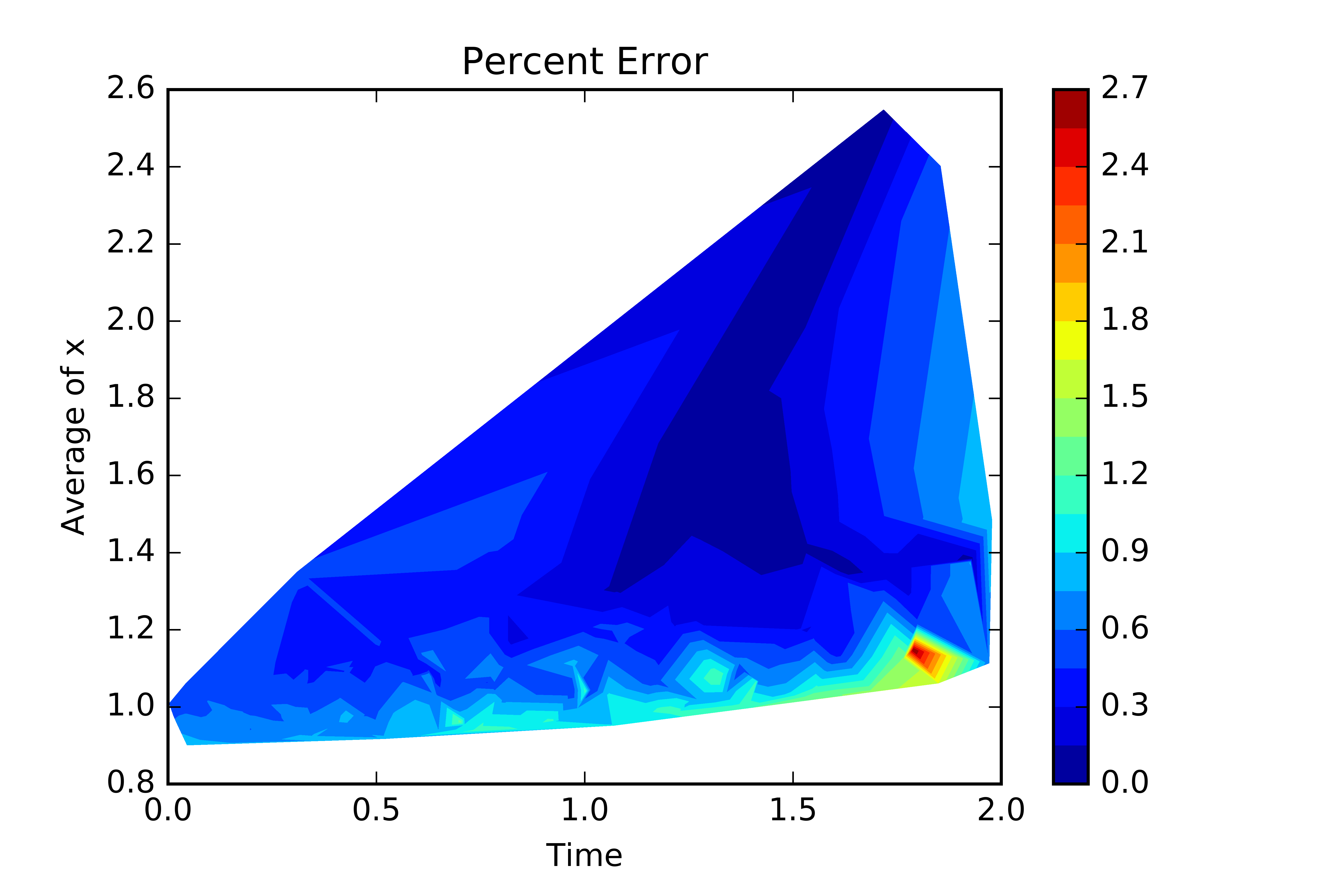}
\end{center}
\caption{Top: Absolute error. Bottom: Percent error. For reference, $u(0, X_0) \in [ 0.10119, 0.10193]$  and the solution at time $T$ is $\max(\textrm{average of x} - 1, 0 )$.}
\label{SubplotFigureContour2}
\end{figure}

\section{High-dimensional Hamilton-Jacobi-Bellman PDE} \label{HJBsection}

We also test the deep learning algorithm on a high-dimensional Hamilton-Jacobi-Bellman (HJB) equation corresponding to the optimal control of a stochastic heat equation. Specifically, we demonstrate that the deep learning algorithm accurately solves the high-dimensional PDE (\ref{HJBpde}). The PDE (\ref{HJBpde}) is motivated by the problem of optimally controlling the stochastic partial differential equation (SPDE):
\begin{eqnarray}
 \frac{\partial v}{\partial t}(t,x) &=& \alpha \frac{\partial^{2} v}{\partial x^{2}}(t,x)  + u(x)  + \sigma \frac{\partial^{2}W}{\partial t\partial x} (t,x), \phantom{....} x \in [0,L],\notag \\
v(t,x=0) &=& \bar v(0), \notag \\
v(t, x=L) &=& \bar v(L), \notag \\
v(t=0,x) &=& v_0(x),
\label{HJBequation0}
\end{eqnarray}
where $u(x)$ is the control and $W(t,x)$ is  \textcolor{black}{ a Brownian sheet (i.e., $\frac{\partial^{2}W}{\partial t\partial x} (t,x)$ is space-time white noise)} defined on a stochastic basis $\left(\Omega,\mathcal{F},\mathcal{F}_{t},\mathbb{P}\right)$.  The
 square integrable, adapted to the filtration $\mathcal{F}_{t}$, control $u$ is a source/sink term which can be used to guide the temperature $v(t,x)$ towards a target profile $\bar v(x)$ on $[0,L]$. As it is discussed in \cite{Cerrai} such problems admit unique solutions in the appropriate generalized sense, see Theorem 3.1 in \cite{Cerrai}. The endpoints at $x=0,L$ are held at the target temperatures. Specifically, the optimal control minimizes
\begin{eqnarray}
\mathbb{E} \bigg{[} \int_0^{\infty} e^{- \gamma s}  \int_0^L \big{[} ( v(s,x) - \bar v (x) )^2 + \lambda u(x)^2  \big{]} dx ds  \bigg{]}.
\label{ObjectiveHJBinfinite}
\end{eqnarray}
 The constant $\gamma > 0$ is a discount factor. The constant $\lambda > 0$ penalizes large values for the control $u(x)$. The goal is to reach the target $\bar v(x)$ while expending the minimum amount of energy. The optimal control $u(x)$ satisfies an \emph{infinite-dimensional} HJB equation. We refer the reader to Theorems 5.3 and 5.4 of  \cite{Cerrai} as well as \cite{Debussche} and \cite{Masiero} for an analysis of infinite-dimensional HJB equations for the stochastic heat equation.

 \textcolor{black}{An example of a problem represented by the SPDE (\ref{HJBequation0}) is the heating of a rod to a target temperature profile. One can control the heat applied to each portion of the rod along its length. There are also random fluctuations in the temperature of the rod due to other environmental factors, which is represented by the  \textcolor{black}{Brownian sheet} $W(t,x)$. The goal is to guide the temperature profile of the rod to the target profile while expending the least amount of energy; see the objective function (\ref{ObjectiveHJBinfinite}).}

 (\ref{HJBequation0}) can be discretized in space, which yields a system of stochastic differential equations (SDEs). \textcolor{black}{(For example, see Section 3.2 of \cite{GainesLectureNotes}.)} This system of SDEs can be used to derive a finite, high-dimensional PDE for the value function and optimal control. \textcolor{black}{That is, we first approximate the SPDE with a finite-dimensional system of SDEs, and then we solve the high-dimensional PDE corresponding to the finite-dimensional system of SDEs.}
\begin{eqnarray}
d X_t^j = \frac{ \alpha}{\Delta^2}   ( X_t^{j+1} - 2 X_t^j + X_t^{j-1} ) dt + U^j_t dt + \frac{ \sigma}{\sqrt{\Delta} }  d W_t^j, \text{ } X^{j}_{0}=v_{0}(j\Delta),
\label{SDEdiscretized}
\end{eqnarray}
where $\Delta$ is the mesh size, $v(t,j \Delta) = X_t^j$, $u(j \Delta) = U^j_t$, and $W_t^j$ are independent standard Brownian motions \textcolor{black}{(see \cite{Gaines}, \cite{Gyongy}, and \cite{GainesLectureNotes} regarding numerical schemes for stochastic parabolic PDEs of the form considered in this section)}. The dimension of the SDE system (\ref{SDEdiscretized}) is $d = \frac{L}{\Delta} - 1$. \textcolor{black}{Note that (\ref{SDEdiscretized}) uses a central difference scheme for the diffusion term in (\ref{HJBequation0}).}

The objective function (\ref{ObjectiveHJBinfinite}) becomes:
\begin{eqnarray}
V( x) = \inf_{U_t \in \mathcal{U} } \mathbb{E} \bigg{[} \int_0^{\infty} e^{- \gamma s} \sum_{j=1}^d \big{[} ( X^j_s - \bar v (j \Delta) )^2 + \lambda (U^j_s)^2  \big{]} \Delta ds \bigg{|} X_0  = x \bigg{]}.
\label{ObjectiveHJB}
\end{eqnarray}
The value function $V(x)$ satisfies a nonlinear PDE with $d$ spatial dimensions $x_1, x_2, \ldots, x_d$.
\begin{eqnarray}
0 &=& \Delta (x - \bar v)^{\top} (x - \bar v) - \frac{1}{4 \lambda \Delta} \sum_{j=1}^d \bigg{(} \frac{\partial V}{\partial x_j}(x) \bigg{)}^2 \notag \\
&+& \frac{\sigma^2}{2 \Delta} \sum_{j=1}^d  \frac{\partial^2 V}{\partial x_j^2} (x) + \frac{ \alpha}{\Delta^2} \sum_{j=1}^d   ( x_{j+1} - 2 x_j + x_{j-1} ) \frac{\partial V}{\partial x_j}(x) - \gamma V(x).
\label{HJBpde}
\end{eqnarray}
The vector $\bar v = ( \bar v ( \Delta), \bar v (2 \Delta), \ldots, \bar v (d \Delta) )$. Note that the values $x_{d+1} = \bar v(L)$ and $x_0 = \bar v(0)$ are constants which correspond to the boundary conditions in (\ref{HJBequation0}). The PDE (\ref{HJBpde}) is high dimensional since the number of dimensions $d = \frac{L}{\Delta} - 1$. The optimal control is
\begin{eqnarray}
U_t^j  = - \frac{1}{2 \lambda \Delta} \frac{\partial V}{\partial x_j}(X_t).
\end{eqnarray}
We solve the PDE (\ref{HJBpde}) using the deep learning algorithm for $d = 21$ dimensions. The size of the domain is $L = 10^{-1}$. The coefficients are $\alpha = 10^{-4}$, $\sigma = 10^{-\frac{1}{2} }$, $\lambda = 1$, and $\gamma = 1$. The target profile is $\bar v(x) = 0$.


The deep learning algorithm's accuracy can be evaluated since a semi-analytic solution is available for (\ref{HJBpde}).\footnote{The PDE (\ref{HJBpde}) has a semi-analytic solution which satisfies a Riccati equation. The Riccati equation can be solved using an iterative method.} Figure \ref{FigureContourHJB} shows a contour plot of the percent error over space. The contour plot is produced in the following way:
\begin{enumerate}
\item Sample spatial points $x^\ell = (x_1^\ell, \ldots, x_{21}^\ell)$ from the distribution of (\ref{SDEdiscretized}) for $\ell =1, \ldots, L$.
\item  Calculate the percent error at each sampled point. The percent error is $A^\ell = \frac{ | f(x^\ell; \theta) - V(x^\ell) |}{ | V(x^\ell) | } \times 100\%$.
\item Aggregate the accuracy over a two-dimensional subspace $\big{(} x_{11}^\ell, \frac{1}{21} \displaystyle \sum_{i=1}^{21} x_i^\ell, A^\ell \big{)}$ for $\ell =1, \ldots, L$.
\item Produce a contour plot from the data $\big{(} x_{11}^\ell, \frac{1}{21} \displaystyle \sum_{i=1}^{21} x_i^\ell, A^\ell \big{)}_{\ell=1}^L$. The x-axis is $x_{11}$ and the y-axis is the average $\frac{1}{21} \displaystyle \sum_{i=1}^{21} x_i$. This corresponds to $v(t,x)$ at the midpoint $x = \frac{L}{2}$ and the average $\frac{1}{L} \int_0^L v(t,x) dx$, respectively.
\end{enumerate}
The average percent error over the entire space is $0.1 \%$.

\begin{figure}[h!]
\begin{center}
\includegraphics[width=.7\textwidth, height=80mm]{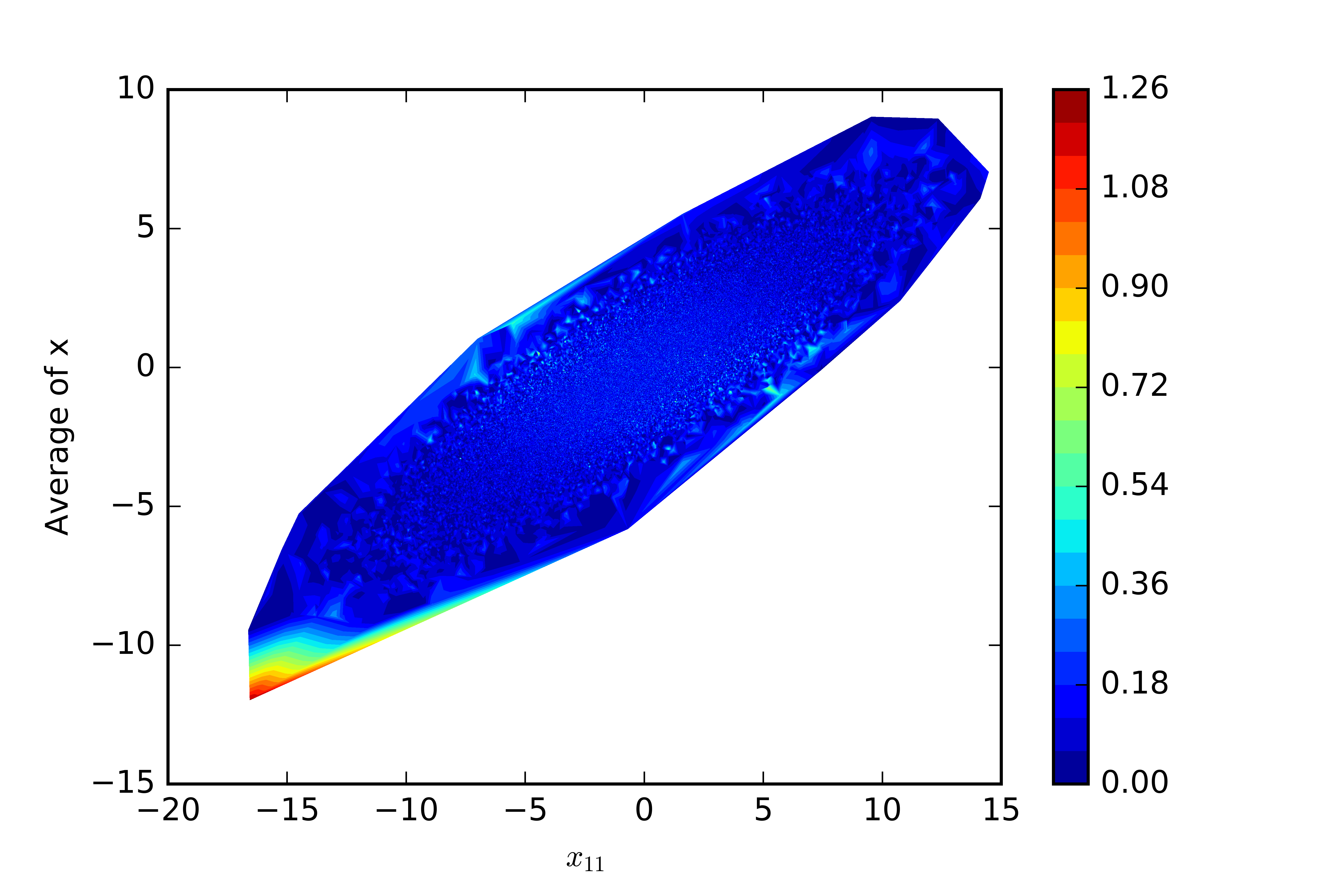}
\end{center}
\caption{Contour plot of the percent error for the deep learning algorithm for a 21-dimensional Hamilton-Jacobi-Bellman PDE. The horizontal axis is the $11$-th dimension. The vertical axis is the average of all dimensions.}
\label{FigureContourHJB}
\end{figure}

\textcolor{black}{Lastly, we close this section by mentioning that in the recent paper \cite{Weinan} (see also \cite{Jentzen}) the authors develop a machine learning algorithm that provides the value at a single point in time and space of the solution to a class of HJB equations  which admit explicit solution that can be obtained through  the  Cole-Hopf  transformation. Their method relies on characterizing the solution via backward stochastic differential equations (BSDE). In contrast, the current work (a) does not rely on BSDE type representations through nonlinear Feynman-Kac formulas, and (b) allows to recover the whole object (i.e. the solution across all points in time and space).}

\section{Burgers' equation} \label{BurgerEquation}

It is often of interest to find the solution of a PDE over a range of problem setups (e.g., different physical conditions and boundary conditions). For example, this may be useful for the design of engineering systems or uncertainty quantification. The problem setup space may be high-dimensional and therefore may require solving many PDEs for many different problem setups, which can be computationally expensive.

Let the variable $p$ represent the problem setup (i.e., physical conditions, boundary conditions, and initial conditions). The variable $p$ takes values in the space $\mathcal{P}$, and we are interested in the solution of the PDE $u(t,x; p)$. (This is sometimes called a ``parameterized class of PDEs".) In particular, suppose $u(t,x; p)$ satisfies the PDE
\begin{eqnarray}
\frac{\partial u}{\partial t}(t,x; p) &=& \mathcal{L}_p u(t,x; p), \phantom{....} (t,x) \in [0,T] \times \Omega, \notag \\
u(t,x;p) &=& g_p(x),  \phantom{....} (t,x) \in [0,T] \times \partial \Omega, \notag \\
u(t=0,x;p) &=& h_p(x), \phantom{....} x \in \Omega.
\label{PDEu_p}
\end{eqnarray}

A traditional approach would be to discretize the $\mathcal{P}$-space and re-solve the PDE many times for many different points $p$. However, the total number of grid points (and therefore the number of PDEs that must be solved) grows exponentially with the number of dimensions, and $\mathcal{P}$ is typically high-dimensional.

We propose to use the DGM algorithm to approximate the \emph{general solution} to the PDE (\ref{PDEu_p}) for different boundary conditions, initial conditions, and physical conditions. The deep neural network is trained using stochastic gradient descent on a sequence of random time, space, and problem setup points $(t, x, p)$. Similar to before,

\begin{itemize}
\item Initialize $\theta$.
\item Repeat until convergence:
\begin{itemize}
\item Generate random samples $(t, x, p)$ from $[0,T] \times \Omega \times \mathcal{P}$,  $(\tilde t, \tilde x)$ from $[0,T] \times \partial \Omega$, and
$\hat x$ from $\Omega$.
\item Construct the objective function
\begin{eqnarray}
J(\theta) &=& \bigg{(} \frac{\partial f}{\partial t}(t,x,p; \theta) -  \mathcal{L}_p f(t,x,p;  \theta) \bigg{)}^2 \notag \\
&+& \bigg{(} g_p(\tilde x) - f( \tilde t,\tilde x,p; \theta) \bigg{)}^2 \notag \\
&+& \bigg{(} h_p(\hat x) - f( 0,\hat x,p; \theta) \bigg{)}^2.
\end{eqnarray}
\item Update $\theta$ with a stochastic gradient descent step
\begin{eqnarray}
\theta \longrightarrow \theta - \alpha \nabla_{\theta} J( \theta ),
\end{eqnarray}
where $\alpha$ is the learning rate.
\end{itemize}

\end{itemize}
If $x$ is low-dimensional ($d \leq 3$), which is common in many physical PDEs, the first and second partial derivatives of $f$ can be calculated via chain rule or approximated by finite difference. We implement our algorithm for Burgers' equation on a finite domain.

\begin{eqnarray}
&\phantom{.}& \frac{\partial u}{\partial t} =  \nu \frac{\partial^2 u}{\partial x^2} - \alpha u \frac{\partial u}{\partial x}, \phantom{....} (t,x) \in [0,1] \times [0,1], \notag \\
 &\phantom{.}&  u(t, x =0) = a, \notag \\
 &\phantom{.}& u(t, x=1)  = b, \notag \\
 &\phantom{.}& u(t=0, x) = g(x), \phantom{....} x \in [0,1]. \notag
\end{eqnarray}

The problem setup space is $\mathcal{P} = (\nu, \alpha, a, b) \in \mathbb{R}^4$. The initial condition $g(x)$ is chosen to be a linear function which matches the boundary conditions $u(t, x =0) = a$ and $u(t, x =1) = b$. We train a \emph{single neural network} to approximate the solution of $u(t, x; p)$ over the entire space $(t,x, \nu, \alpha, a, b) \in [0,1] \times [0,1] \times [10^{-2}, 10^{-1} ] \times [10^{-2}, 1 ] \times [-1, 1] \times [-1, 1]$. We use a larger network ($6$ layers, $200$ units per layer) than in the previous numerical examples. Figure \ref{SubplotBurger} compares the deep learning solution with the exact solution for several different problem setups $p$. The solutions are very close; in several cases, the two solutions are visibly indistinguishable. The deep learning algorithm is able to accurately capture the shock layers and boundary layers.

Figure \ref{SubplotBurger2} presents the accuracy of the deep learning algorithm for different times $t$ and different choices of $\nu$. As $\nu$ becomes smaller, the solution becomes steeper. It also shows the shock layer forming over time. The contour plot (\ref{BurgerContourPlot}) reports the absolute error of the deep learning solution for different choices of $b$ and $\nu$.

\begin{figure}[h!]
\begin{center}
\includegraphics[width=.4\textwidth, height=50mm]{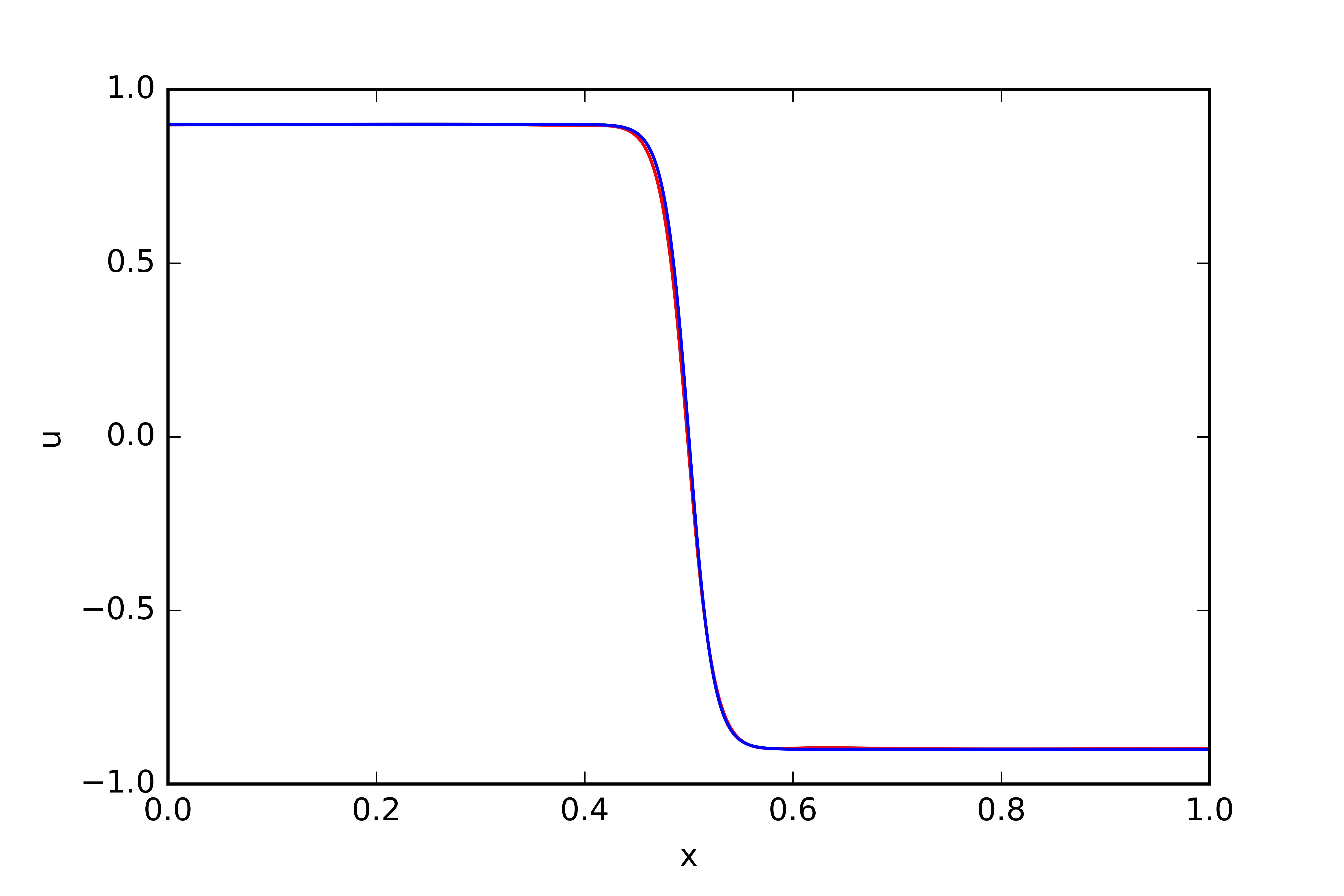}
\includegraphics[width=.4\textwidth, height=50mm]{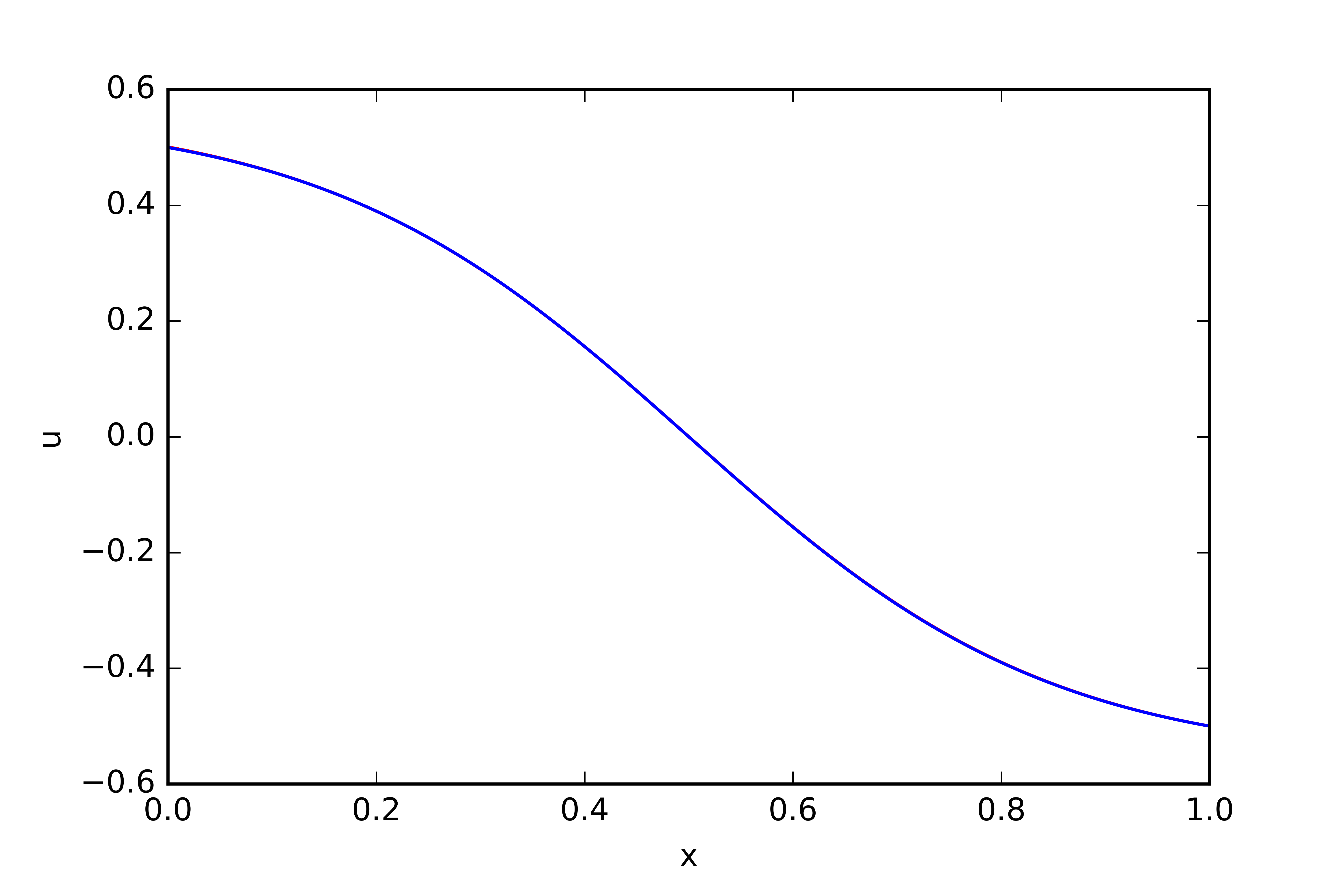}
\includegraphics[width=.4\textwidth, height=50mm]{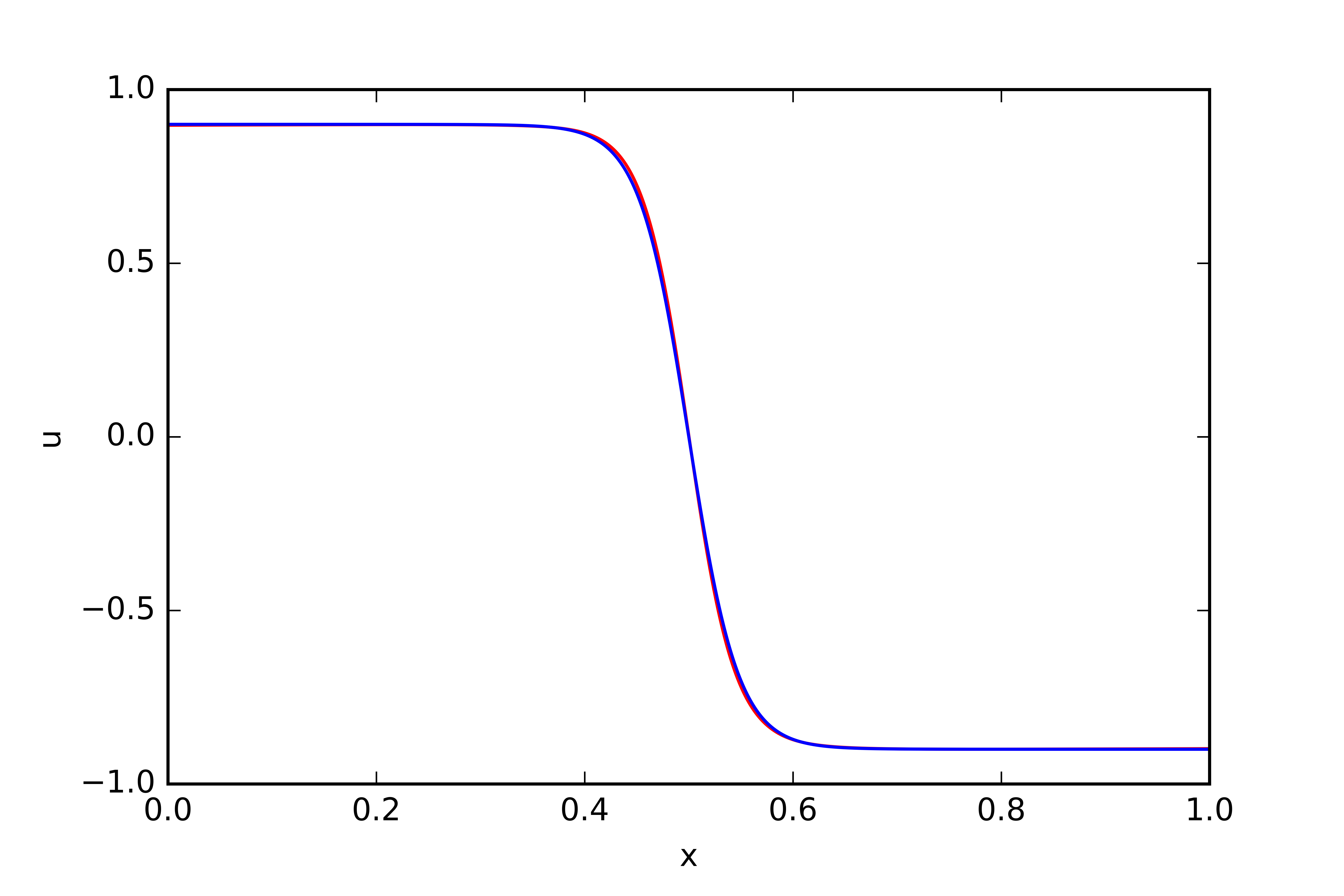}
\includegraphics[width=.4\textwidth, height=50mm]{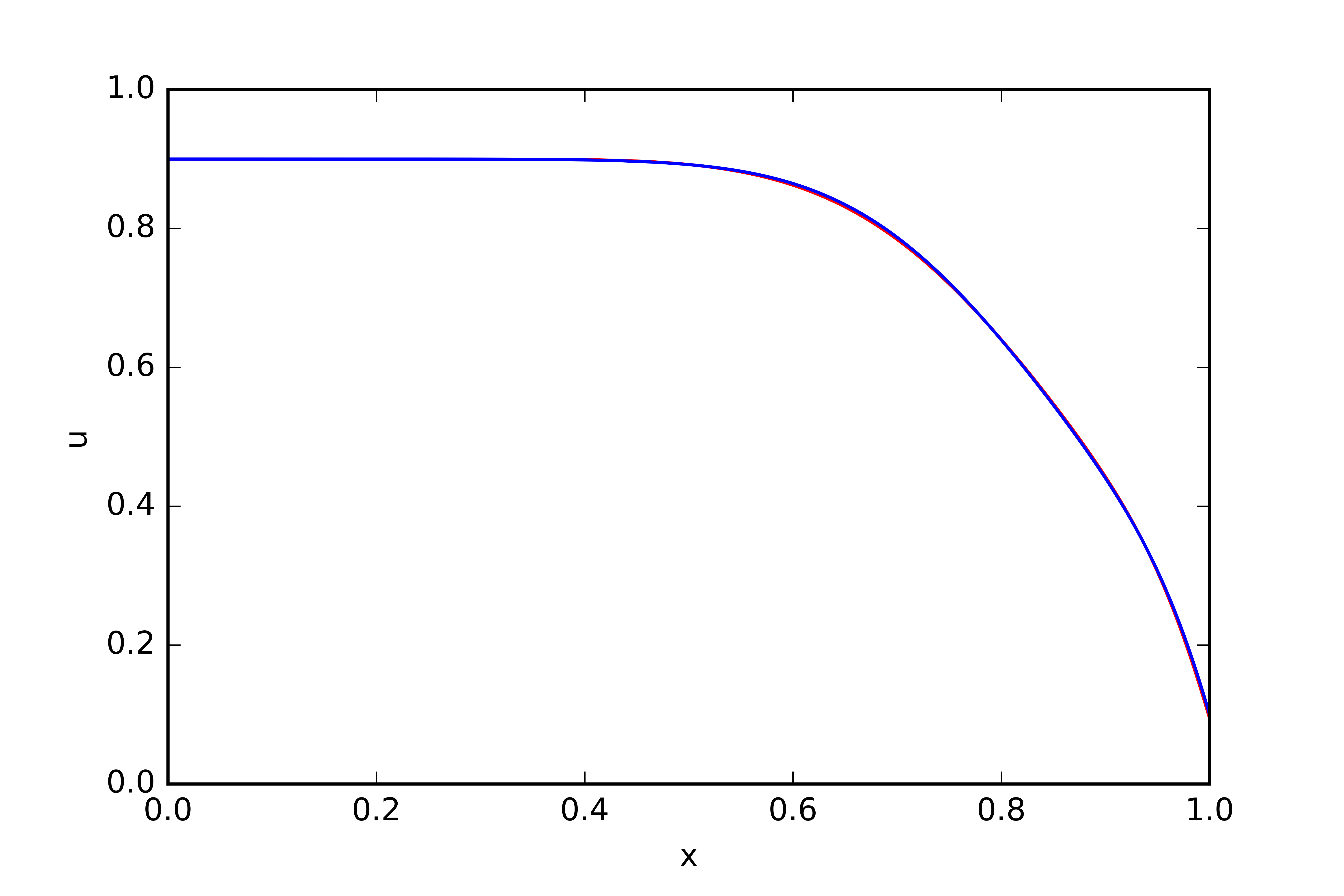}
\includegraphics[width=.4\textwidth, height=50mm]{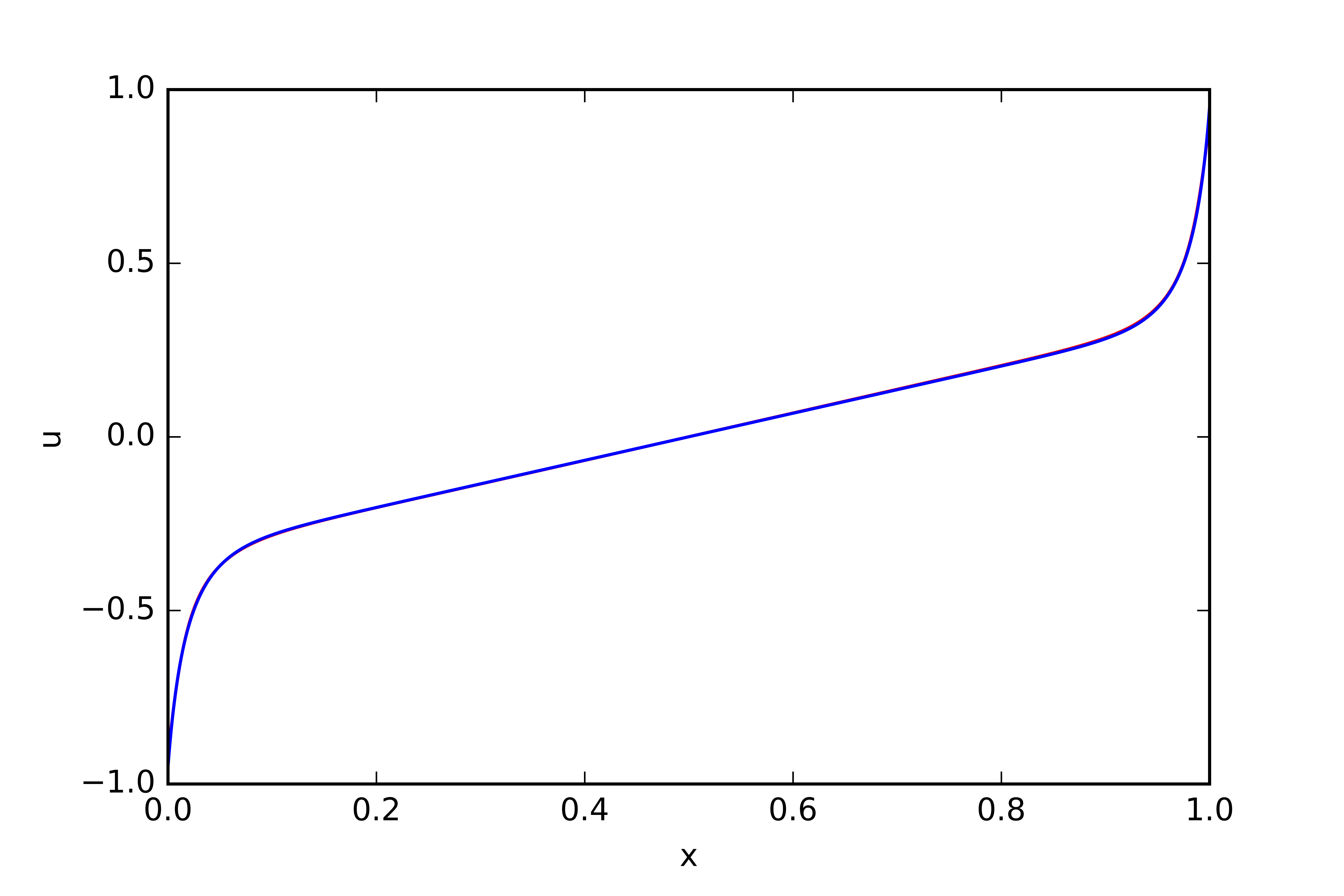}
\includegraphics[width=.4\textwidth, height=50mm]{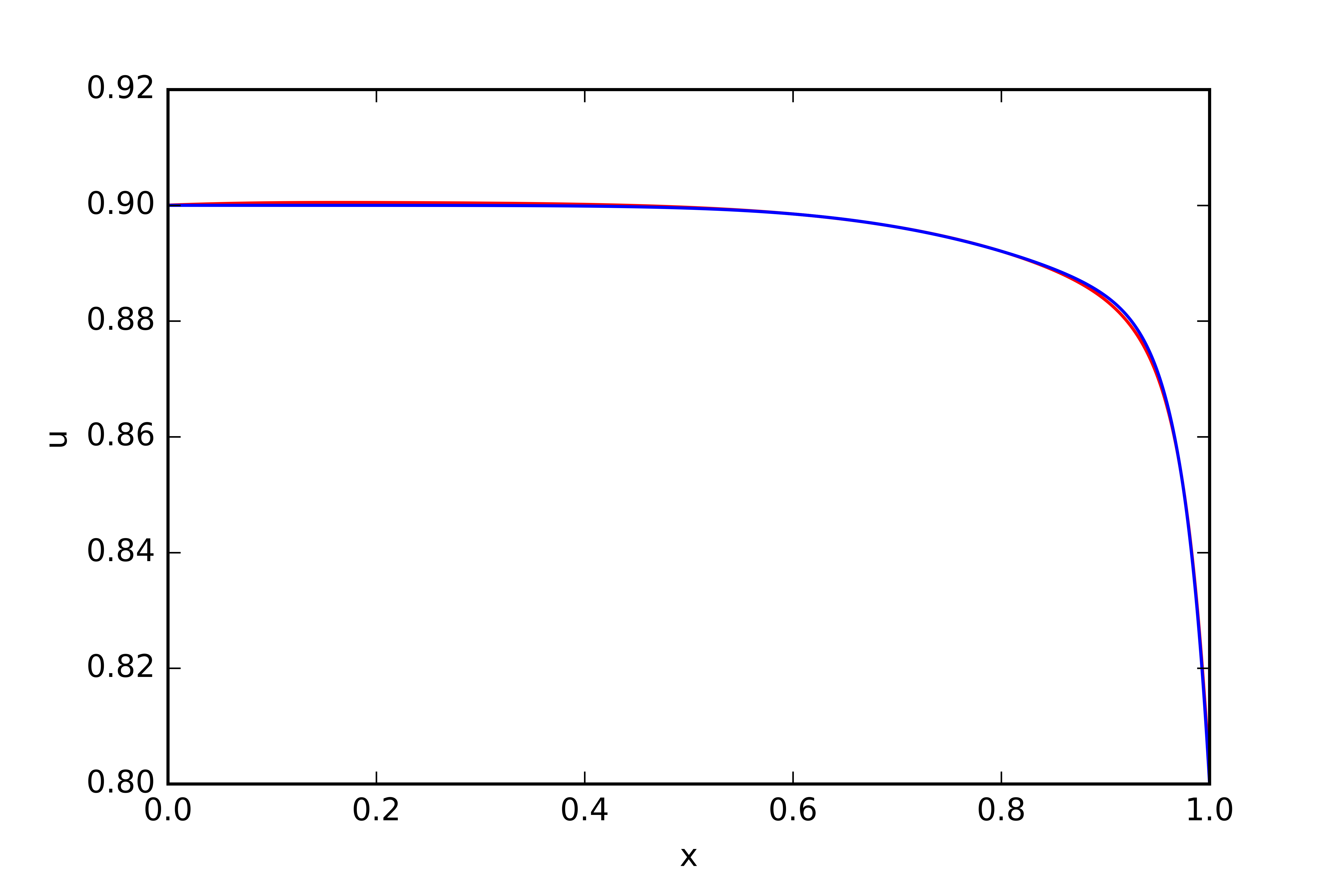}
\end{center}
\caption{The deep learning solution is in \textcolor{red}{red}. The ``exact solution", found via finite difference, is in \textcolor{blue}{blue}. Solutions are reported at time $t =1$. The solutions are very close; in several cases, the two solutions are visibly indistinguishably. The problem setups, in counter-clockwise order, are $( \nu, \alpha, a,b) =
(0.01, 0.95, 0.9, - 0.9),    (0.02, 0.95, 0.9, -0.9),  (0.01, 0.95, -0.95, 0.95),  (0.02, 0.9, 0.9, 0.8),  (0.01, 0.75, 0.9, 0.1),$ and $(0.09, 0.95, 0.5 -0.5)$.}
\label{SubplotBurger}
\end{figure}

\begin{figure}[h!]
\begin{center}
\includegraphics[width=.4\textwidth, height=50mm]{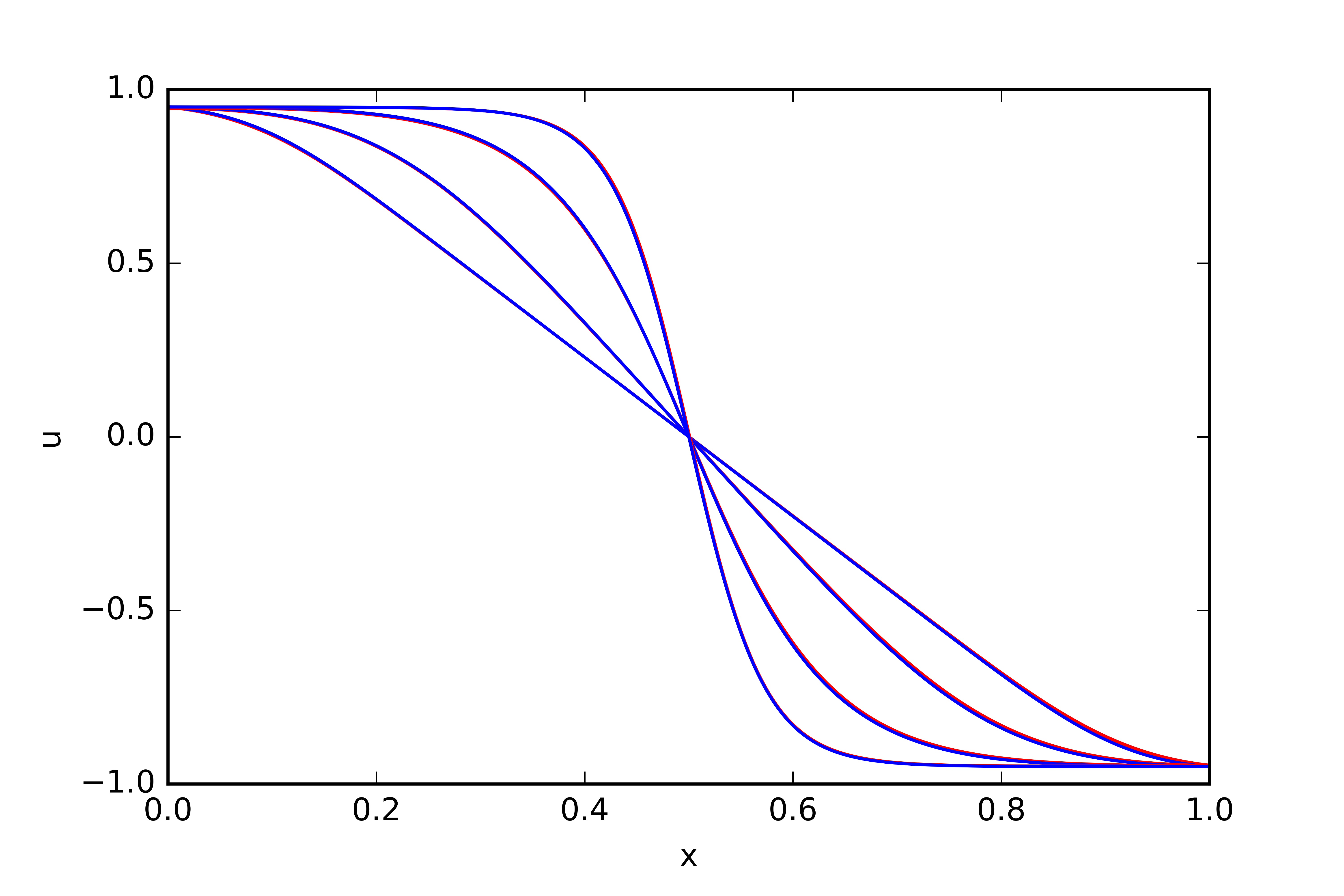}
\includegraphics[width=.4\textwidth, height=50mm]{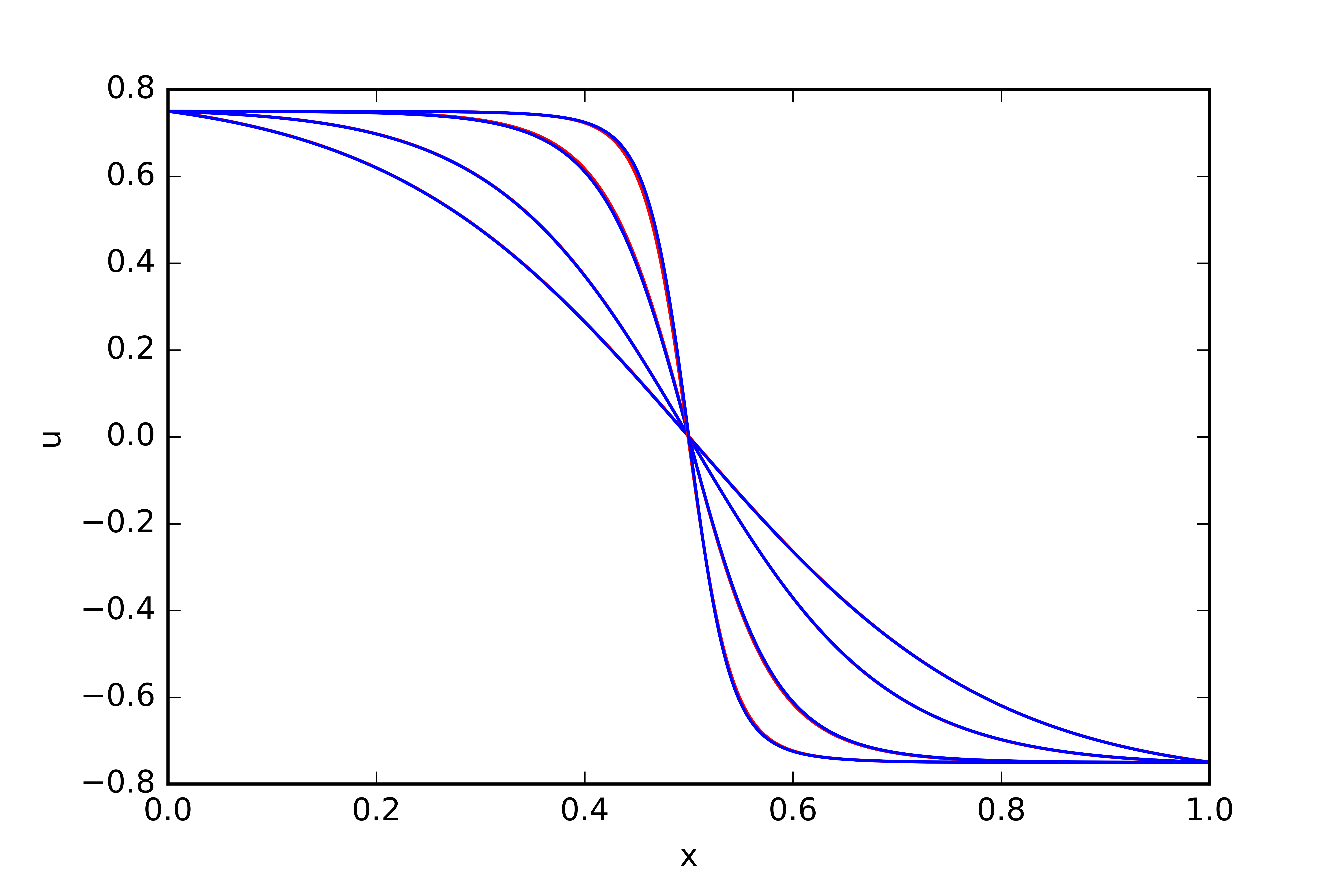}
\end{center}
\caption{The deep learning solution is in \textcolor{red}{red}. The ``exact solution", found via finite difference, is in \textcolor{blue}{blue}. \textbf{Left plot:} Comparison of solutions at times $t = 0.1, 0.25, 0.5, 1$ for $( \nu, \alpha, a,b) = (0.03, 0.9, 0.95, -0.95)$. \textbf{Right plot:} Comparison of solutions for $\nu =  0.01, 0.02, 0.05, 0.09$ at time $t =1$ and with $(\alpha, a, b) = (0.8, 0.75, - 0.75)$.}
\label{SubplotBurger2}
\end{figure}



\begin{figure}[h!]
\begin{center}
\includegraphics[width=.7\textwidth, height=80mm]{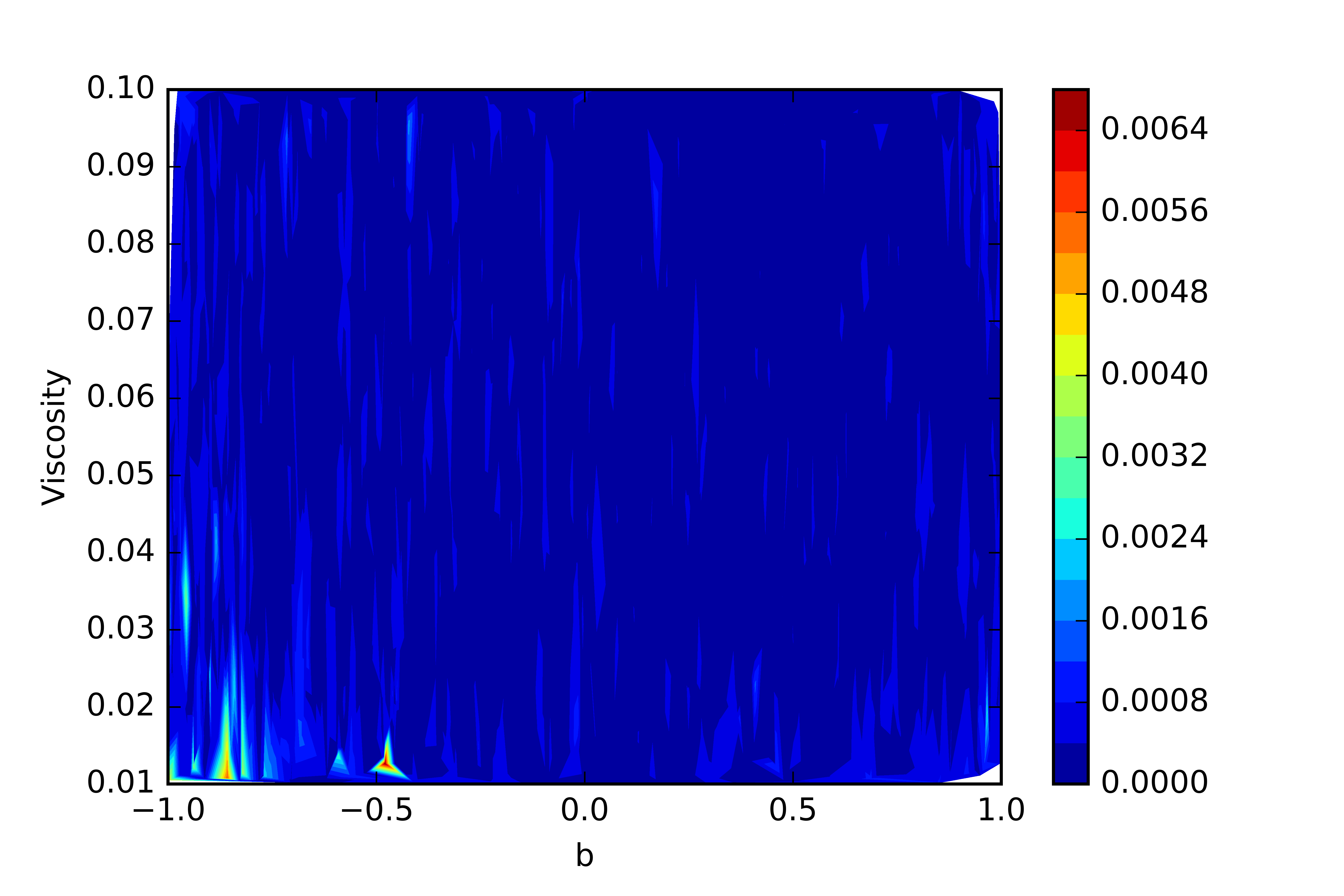}
\end{center}
\caption{Contour plot of the average absolute error of the deep learning solution for different $b$ and $\nu$ (the viscosity). The absolute error is averaged across $x \in [0,1]$ for time $t = 1$.}
\label{BurgerContourPlot}
\end{figure}

\section{Neural Network Approximation Theorem for PDEs}  \label{ApproximationProof}
Let the L$^2$ error $J(f)$ measure how well the neural network $f$ satisfies the differential operator, boundary condition, and initial condition. Define $\mathfrak{C}^{n}$ as the class of neural networks with $n$ hidden units and let $f^{n}$ be a neural network with $n$ hidden units which minimizes $J(f)$. We prove that
\begin{eqnarray}
&& \text{there exists }f^{n}\in \mathfrak{C}^{n} \text{ such that }  J(f^{n})\rightarrow 0, \text{ as }n\rightarrow \infty, \text{ and} \notag \\
&& f^{n} \rightarrow u \phantom{....} \textrm{as} \phantom{....} n \rightarrow \infty, \notag
\end{eqnarray}
in the appropriate sense, for \textcolor{black}{a class of quasilinear parabolic PDEs} with the principle term in divergence form under certain growth and smoothness assumptions on the nonlinear terms. Our theoretical result only covers \textcolor{black}{a class of quasilinear parabolic PDEs} as described in this section. However, the numerical results of this paper indicate that the results are more broadly applicable.
%

The proof requires the joint analysis of the approximation power of neural networks as well as the continuity properties of partial differential equations. First, we show that the neural network can satisfy the differential operator, boundary condition, and initial condition arbitrarily well for sufficiently large $n$.
\begin{eqnarray}
J(f^{n}) \rightarrow 0 \phantom{.....} \textrm{as} \phantom{.....} n \rightarrow \infty.
\label{errorJtheorem0}
\end{eqnarray}
Let $u$ be the solution to the PDE. The statement (\ref{errorJtheorem0}) does not necessarily imply that $f^{n} \rightarrow u$. One challenge to proving convergence is that we only have $L^{2}$ control of the error.  We prove convergence for the case of homogeneous boundary data, i.e., $g(t,x)=0$, by first establishing that each neural network $\{ f^{n} \}_{n=1}^{\infty}$ satisfies a PDE with a source term $h^n(t,x)$.  Importantly, the source terms $h^n(t,x)$ are only known to be vanishing in $L^2$. We are then able to prove that the convergence of $f^{n} \rightarrow u$ as $n \rightarrow \infty$ in the appropriate space holds using  compactness arguments.

The precise statement of the theorem and the presentation of the proof is in the next two sections. Section \ref{errorJanalysis} proves that $J(f^{n}) \rightarrow 0$ as $n \rightarrow \infty$. Section \ref{ConvergenceProof} contains convergence results of $f^{n}$ to the solution $u$ of the PDE as $n \rightarrow \infty$. The main result is Theorem \ref{T:MainConvTheorem}. \textcolor{black}{For readability purposes the corresponding proofs are in Appendix \ref{S:Proofs}.}

\subsection{Convergence of the L$^2$ error $J(f)$} \label{errorJanalysis}
In this section, we present a theorem guaranteeing the existence of multilayer feed forward networks $f$ able to universally approximate solutions of quasilinear \textcolor{black}{parabolic} PDEs in the sense that there is $f$ that makes the objective function $J(f)$  arbitrarily small. To do so, we use the results of \cite{Hornik} on universal approximation of functions and their derivatives and make appropriate assumptions on the coefficients of the PDEs to guarantee that a classical solution exists (since then the results of \cite{Hornik} apply).

Consider a bounded set $\Omega\subset \mathbb{R}^{d}$ with a smooth boundary $\partial \Omega$ and denote $\Omega_{T}=(0,T]\times \Omega$ and $\partial\Omega_{T}=(0,T]\times\partial\Omega$. \textcolor{black}{In this subsection} we consider the class of quasilinear parabolic PDE's of the form 

\begin{align}
\partial_{t} u(t,x)&-\text{div} \left(\alpha(t,x,u(t,x), \nabla u(t,x))\right)+\gamma(t,x,u(t,x),\nabla u(t,x))=0, \textrm{ for }(t,x)\in\Omega_{T}\nonumber\\
u(0,x)&=u_{0}(x), \textrm{ for }x\in\Omega\nonumber\\
u(t,x)&=g(t,x), \textrm{ for }(t,x)\in\partial\Omega_{T}\label{Eq:QuasiLinearParabolicPDE}
\end{align}

For notational convenience, let us write the operator of (\ref{Eq:QuasiLinearParabolicPDE}) as $\mathcal{G}$. Namely, let us denote
\begin{align*}
\mathcal{G}[u](t,x)&=\partial_{t} u(t,x)-\text{div} \left(\alpha(t,x,u(t,x), \nabla u(t,x))\right)+\gamma(t,x,u(t,x),\nabla u(t,x)).
\end{align*}

Notice that we can write
\begin{align*}
\mathcal{G}[u](t,x)&=\partial_{t} u(t,x)-\sum_{i,j=1}^{d}\frac{\partial\alpha_{i}(t,x,u(t,x), \nabla u(t,x))}{\partial u_{x_{j}}}\partial_{x_{i},x_{j}}u(t,x)+\hat{\gamma}(t,x,u(t,x),\nabla u(t,x)),
\end{align*}
where
\begin{align*}
\hat{\gamma}(t,x,u,p)=\gamma(t,x,u,p)-\sum_{i=1}^{d}\frac{\partial\alpha_{i}(t,x,u,p)}{\partial u}\partial_{x_{i}}u -\sum_{i=1}^{d}\frac{\partial\alpha_{i}(t,x,u,p)}{\partial x_{i}}.
\end{align*}

For the purposes of this section, we consider equations of the type (\ref{Eq:QuasiLinearParabolicPDE}) that have classical solutions. 

In particular we assume that there is a unique $u(t,x)$ solving (\ref{Eq:QuasiLinearParabolicPDE}) such that
\begin{align}
u(t,x)\in \mathcal{C}(\bar{\Omega}_{T})\bigcap \mathcal{C}^{1+\eta/2,2+\eta}(\Omega_{T}) \textrm{ with }\eta\in(0,1) \textrm{ and that } \sup_{(t,x)\in\Omega_{T}}\sum_{k=1}^{2}|\nabla_{x}^{(k)}u(t,x)|<\infty.\label{Eq:UniversalApproxRegularityPDE}
\end{align}

We refer the interested reader to Theorems  5.4, 6.1 and 6.2 of Chapter V in \cite{Ladyzhenskaya} for specific general conditions on $\alpha,\gamma$ guaranteeing the validity of the aforementioned statement.

Universal approximation results for single functions and their derivatives have been obtained under various assumptions in \cite{Cybenko,HornikEtAl,Hornik}. In this paper, we use \textcolor{black}{Theorem 3 of} \cite{Hornik}. Let us recall the setup appropriately modified for our case of interest. Let $\psi$ be an activation function, e.g., of sigmoid type, of the hidden units and define the set
\begin{align}
\mathfrak{C}^{n}(\psi)=\left\{\zeta(t,x):\mathbb{R}^{1+d}\mapsto\mathbb{R}: \zeta(t,x)=\sum_{i=1}^{n}\beta_{i}\psi\left(\alpha_{1,i}t+\sum_{j=1}^{d}\alpha_{j,i}x_{j}+c_{j}\right)\right\}.\label{Eq:FeedForwardNN}
\end{align}
 \textcolor{black}{where $\theta=\left(\beta_{1},\cdots,\beta_{n},\alpha_{1,1},\cdots,\alpha_{d,n},c_{1},c_{1},\cdots, c_{n}\right)\in\mathbb{R}^{2n+n(1+d)}$ compose the elements of the parameter space}.
Then we have the following result.

\begin{theorem}\label{T:UniversalApproximationThmPDE}
Let $\mathfrak{C}^{n}(\psi)$ be given by (\ref{Eq:FeedForwardNN}) where $\psi$ is assumed to be in $\mathcal{C}^{2}(\mathbb{R}^{d})$, bounded and non-constant. Set $\mathfrak{C}(\psi)=\bigcup_{n\geq 1}\mathfrak{C}^{n}(\psi)$. Assume that $\Omega_{T}$ is compact and consider the measures $\nu_{1}$, $\nu_{2}$, $\nu_{3}$ whose support is contained in $\Omega_{T}$, $\Omega$ and $\partial \Omega_{T}$ respectively. In addition, assume that the PDE (\ref{Eq:QuasiLinearParabolicPDE}) has a unique classical solution such that (\ref{Eq:UniversalApproxRegularityPDE}) holds. Also, assume that the nonlinear terms $\frac{\partial\alpha_{i}(t,x,u, p)}{\partial p_{j}}$ and $\hat{\gamma}(t,x,u,p)$ are locally Lipschitz in $(u,p)$ with Lipschitz constant that can have at most polynomial growth on $u$ and $p$, uniformly with respect to $t,x$. Then, for every $\epsilon>0$, there exists a positive constant $K>0$ that may depend on 
$\sup_{\Omega_{T}}|u|$, $\sup_{\Omega_{T}}|\nabla_{x}u|$ and $\sup_{\Omega_{T}}|\nabla^{(2)}_{x}u|$
such that there exists a function $f\in\mathfrak{C}(\psi)$ that satisfies
\[
J(f)\leq K\epsilon.
\]
\end{theorem}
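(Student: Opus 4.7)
The plan is to invoke Theorem 3 of \cite{Hornik} to construct a single neural network $f\in\mathfrak{C}^{n}(\psi)$ that simultaneously approximates $u$, $\partial_{t}u$, $\nabla_{x}u$ and $\nabla^{(2)}_{x}u$ in sup-norm on the compact set $\bar\Omega_{T}$, and then to use this joint approximation together with the identity $\mathcal{G}[u]=0$ to bound each of the three terms of $J(f)$ directly. The regularity assumption $u\in\mathcal{C}^{1+\eta/2,2+\eta}(\Omega_{T})$ together with the boundedness of $\nabla^{(k)}_{x} u$ for $k=1,2$ is exactly what is needed for Hornik's result to apply and produce, for every $\epsilon>0$, an $f\in\mathfrak{C}(\psi)$ with
\[
\max\Bigl(\sup_{\Omega_{T}}|f-u|,\ \sup_{\Omega_{T}}|\partial_{t}f-\partial_{t}u|,\ \sup_{\Omega_{T}}|\nabla_{x}f-\nabla_{x}u|,\ \sup_{\Omega_{T}}|\nabla^{(2)}_{x}f-\nabla^{(2)}_{x}u|\Bigr)<\epsilon.
\]

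With such $f$ in hand, the boundary and initial components of $J(f)$ are immediate: since $u(0,\cdot)=u_{0}$ and $u=g$ on $\partial\Omega_{T}$, both $\|f(0,\cdot)-u_{0}\|^{2}_{2,\Omega,\nu_{2}}$ and $\|f-g\|^{2}_{2,\partial\Omega_{T},\nu_{3}}$ are bounded by $\epsilon^{2}$ times the total mass of the corresponding measure. For the PDE residual I would use the second displayed form of $\mathcal{G}$ in the text and write, with the shorthand $a_{ij}(t,x,w):=\tfrac{\partial\alpha_{i}}{\partial p_{j}}(t,x,w,\nabla w)$,
\[
\mathcal{G}[f]=\mathcal{G}[f]-\mathcal{G}[u]=(\partial_{t}f-\partial_{t}u)-\sum_{i,j=1}^{d}\bigl(a_{ij}(f)\,\partial_{x_{i}x_{j}}f-a_{ij}(u)\,\partial_{x_{i}x_{j}}u\bigr)+\bigl(\hat\gamma(\cdot,f,\nabla f)-\hat\gamma(\cdot,u,\nabla u)\bigr),
\]
and split each middle summand as $[a_{ij}(f)-a_{ij}(u)]\,\partial_{x_{i}x_{j}}f+a_{ij}(u)[\partial_{x_{i}x_{j}}f-\partial_{x_{i}x_{j}}u]$.

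The technical heart of the argument, and the step where the hypotheses on $\alpha$ and $\hat\gamma$ are essential, lies in controlling these nonlinear differences. Because $(u,\nabla u)$ is uniformly bounded by assumption and $(f,\nabla f)$ stays within $\epsilon$ of it, both evaluations of $\tfrac{\partial\alpha_{i}}{\partial p_{j}}$ and of $\hat\gamma$ take arguments in a fixed compact subset of $\mathbb{R}^{1+d}$; on this set the local Lipschitz constants are finite, dominated by a constant depending polynomially on $\sup_{\Omega_{T}}|u|$ and $\sup_{\Omega_{T}}|\nabla_{x}u|$. This yields $|a_{ij}(f)-a_{ij}(u)|+|\hat\gamma(\cdot,f,\nabla f)-\hat\gamma(\cdot,u,\nabla u)|\le C\epsilon$, while $|a_{ij}(u)|$ and $|\partial_{x_{i}x_{j}}f|\le\sup_{\Omega_{T}}|\nabla^{(2)}_{x}u|+\epsilon$ are uniformly bounded; combining gives a pointwise estimate $|\mathcal{G}[f]|\le K'\epsilon$ on $\Omega_{T}$ and hence $\|\mathcal{G}[f]\|^{2}_{2,\Omega_{T},\nu_{1}}\le (K')^{2}\epsilon^{2}\,\nu_{1}(\Omega_{T})$. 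Summing the three contributions gives $J(f)\le K\epsilon^{2}$, which after rescaling $\epsilon$ delivers the stated bound $J(f)\le K\epsilon$ with $K$ depending only on $\sup_{\Omega_{T}}|u|$, $\sup_{\Omega_{T}}|\nabla_{x}u|$, $\sup_{\Omega_{T}}|\nabla^{(2)}_{x}u|$ and on the total masses of $\nu_{1},\nu_{2},\nu_{3}$. The main obstacle is precisely this joint treatment of the nonlinear coefficient and the second derivatives of $f$: both differ from their $u$-counterparts, so a naive bound could blow up, and it is only the combination of uniform (hence compact-range) derivative approximation from \cite{Hornik} with the polynomial growth of the local Lipschitz constants that keeps everything finite.
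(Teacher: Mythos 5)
Your proposal follows essentially the same route as the paper's proof: invoke Theorem 3 of Hornik for uniform approximation of $u$, $\partial_t u$, $\nabla_x u$, $\nabla^{(2)}_x u$ on the compact $\bar\Omega_T$, use $\mathcal{G}[u]=0$ to reduce the residual to $\mathcal{G}[f]-\mathcal{G}[u]$, split the principal term via an add-and-subtract decomposition of $a_{ij}\,\partial_{x_ix_j}(\cdot)$, and exploit the polynomial-growth local Lipschitz bounds on $\hat\gamma$ and $\partial\alpha_i/\partial p_j$ together with the uniform bounds on $(u,\nabla u,\nabla^{(2)}u)$ to control the nonlinear differences. The one stylistic deviation is that you pass to a pointwise sup-norm estimate $|\mathcal{G}[f]|\le K'\epsilon$ and then integrate, whereas the paper works directly in $L^2$ with Hölder's inequality; since $\bar\Omega_T$ is compact, $\nu_1$ has support there, and $(f,\nabla f,\nabla^{(2)}f)$ stays in a fixed $\epsilon$-neighborhood of the bounded quantities $(u,\nabla u,\nabla^{(2)}u)$, your pointwise shortcut is valid and delivers the same conclusion.
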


The proof of this theorem is in the Appendix.

\subsection{Convergence of the neural network to the PDE solution} \label{ConvergenceProof}

We now prove, under stronger conditions, the convergence of the neural networks $f^{n}$ to the solution $u$ of the PDE
\begin{align}
\partial_{t} u(t,x)&-\text{div} \left(\alpha(t,x,u(t,x), \nabla u(t,x))\right)+\gamma(t,x,u(t,x),\nabla u(t,x))=0, \textrm{ for }(t,x)\in\Omega_{T}\nonumber\\
u(0,x)&=u_{0}(x), \textrm{ for }x\in\Omega\nonumber\\
u(t,x)&=0, \textrm{ for }(t,x)\in\partial\Omega_{T},\label{Eq:QuasiLinearParabolicPDE2}
\end{align}
as $n \rightarrow \infty$. Notice that we have restricted the discussion to homogeneous boundary data. We do this for both presentation and mathematical reasons. \footnote{We set $u(t,x)=0, \textrm{ for }(t,x)\in\partial\Omega_{T}$, i.e., $g=0$, to circumvent certain technical difficulties arising due to inhomogeneous boundary conditions. If $g\neq0$ such that $g$ is the trace of some appropriately smooth function, say $\phi$, then one can reduce the inhomogeneous boundary conditions on $\partial\Omega_{T}$ to the homogeneous one by introducing in place of $u$ the new function $u-\phi$, \textcolor{black}{see Section 4 of Chapter V in \cite{Ladyzhenskaya} or Chapter 8 of \cite{GilbargTrudinger} for details on such considerations}. We do not explore this here, because our goal is not to prove the most general result possible, but to provide a concrete setup in which we can prove the validity of the approximation results of interest.}

The objective function is
\begin{align}
J(f)&=\norm{  \mathcal{G}[f]}^{2}_{2,\Omega_{T}} +   \norm{f}^{2}_{2, \partial \Omega_{T}} + \norm{ f(0, \cdot) - u_0 }^{2}_{2, \Omega}\nonumber
\end{align}


Recall that the norms above are $L^{2}(X)$ norms in the respective space $X=\Omega_{T}, \partial \Omega_{T}$ and $\Omega$ respectively. From Theorem \ref{T:UniversalApproximationThmPDE}, we have that
\begin{eqnarray}
J(f^{n}) \rightarrow 0 \text{ as }n\rightarrow \infty.\nonumber
\end{eqnarray}

Each neural network $f^{n}$ satisfies the PDE
\begin{align}
\mathcal{G}[f^{n}](t,x)&=h^{n}(t,x), \textrm{ for }(t,x)\in\Omega_{T} \nonumber\\
f^{n}(0,x)&=u^{n}_{0}(x), \textrm{ for }x\in\Omega\nonumber\\
f^{n}(t,x)&=g^{n}(t,x), \textrm{ for }(t,x)\in\partial\Omega_{T}\label{Eq:ApproxQuasiLinearParabolicPDE}
\end{align}
for some  $h^{n}, u^{n}_{0},$ and $g^{n} $  such that
\begin{align}
&\norm{  h^{n}}^{2}_{2,\Omega_{T}}+ \norm{  g^{n} }^{2}_{2, \partial \Omega_{T}} + \norm{ u^{n}_{0} - u_0 }^{2}_{2, \Omega}\rightarrow 0 \text{ as }n\rightarrow \infty.\label{Eq:ApproxQuasiLinearDecayRates}
\end{align}
For the purposes of this section, we  make the following set of assumptions.
\begin{condition}\label{A:MinimalAssumptions}
\begin{itemize}
\item{There is a constant  $\mu>0$ and positive functions $\kappa(t,x),\lambda(t,x)$ such that for all $(t,x)\in\Omega_{T}$ we have
\begin{align*}
\norm{\alpha(t,x,u,p)} \leq \mu(\kappa(t,x)+\norm{p}), \text{ and } |\gamma(t,x,u,p)|\leq \lambda(t,x)\norm{p},\nonumber
\end{align*}
with $\kappa \in L^{2}(\Omega_{T})$, $\lambda\in L^{d+2+\eta}(\Omega_{T})$ for some $\eta>0$.}
\item{$\alpha(t,x,u,p)$ and $\gamma(t,x,u,p)$ are Lipschitz continuous in $(t,x,u,p)\in \Omega_{T}\times\mathbb{R}\times\mathbb{R}^{d}$ uniformly on compacts of the form $\{(t,x)\in \bar{\Omega}_{T}, |u|\leq C, |p|\leq C\}$.}
\item{$\alpha(t,x,u,p)$ is differentiable with respect to $(x,u,p)$ with continuous derivatives. 
}
\item{There is a positive constant $\nu>0$ such that
\begin{align*}
\alpha(t,x,u,p)p\geq \nu |p|^{2}
\end{align*}
and
\begin{align*}
\left<\alpha(t,x,u,p_{1})-\alpha(t,x,u,p_{2}), p_{1}-p_{2}\right> >0, \text{ for every }  p_{1},p_{2}\in\mathbb{R}^{d}, p_{1}\neq p_{2}.\label{Eq:CondNonlinearCase}
\end{align*}
   }
\item{ $u_{0}(x)\in \mathcal{C}^{0,2+\xi}(\bar{\Omega})$ for some $\xi>0$\footnote{In general, the H\"{o}lder space $\mathcal{C}^{0,\xi}(\bar{\Omega})$ is the Banach space of continuous functions in $\bar{\Omega}$ having continuous derivatives up to order $[\xi]$ in $\bar{\Omega}$ with finite corresponding uniform norms and finite uniform $\xi-[\xi]$ H\"{o}lder norm. Analogously, we also define the H\"{o}lder space $\mathcal{C}^{0,\xi,\xi/2}(\bar{\Omega}_{T})$ which in addition has finite $[\xi]/2$ and $(\xi-[\xi])/2$ regular and H\"{o}lder derivatives norms in time respectively. These spaces are denoted by $H^{\xi}(\bar{\Omega})$ and $H^{\xi,\xi/2}(\bar{\Omega}_{T})$ respectively in \cite{Ladyzhenskaya}. } with itself and its first derivative bounded in $\bar{\Omega}$.}
\item{$\Omega$ is a bounded, open subset of $\mathbb{R}^{d}$ with boundary $\partial\Omega\in \mathcal{C}^{2}$.}
\item{For every $n\in\mathbb{N}$,  $f^{n}\in\mathcal{C}^{1,2}(\bar{\Omega}_{T})$. In addition, $(f^{n})_{n\in\mathbb{N}}\in L^{2}(\Omega_{T})$.}
\end{itemize}
\end{condition}

\begin{theorem}\label{T:MainConvTheorem}
Assume that Condition \ref{A:MinimalAssumptions} and (\ref{Eq:ApproxQuasiLinearDecayRates}) hold. Then, problem (\ref{Eq:QuasiLinearParabolicPDE2}) has a  unique bounded solution in 
$\mathcal{C}^{0,\delta,\delta/2}(\bar{\Omega}_{T})\cap L^{2}\left(0,T; W^{1,2}_{0}(\Omega)\right)\cap W^{(1,2),2}_{0}(\Omega_{T}^{\prime})$ for some $\delta>0$ and any interior subdomain $\Omega_{T}^{\prime}$ of $\Omega_{T}$\footnote{Here $W^{(1,2),2}_{0}(\Omega_{T}^{\prime})$ denotes the Banach space which is the closure of $C_{0}^{\infty}(\Omega_{T}^{\prime})$ with elements from $L^{2}(\Omega_{T}^{\prime})$ having generalized derivatives of the form $D^{r}_{t}D^{s}_{x}$ with $r,s$ such that $2r+s\leq 2$ with the usual Sobolev norm.}. 
 In addition,  $f^{n}$ converges  to $u$, the unique solution to (\ref{Eq:QuasiLinearParabolicPDE2}), strongly in $L^{\rho}(\Omega_{T})$ for every $\rho<2$. If, in addition, the sequence $\{f^{n}(t,x)\}_{n\in\mathbb{N}}$ is uniformly bounded in $n$ and equicontinuous then the convergence to $u$ is uniform in $\Omega_{T}$.
\end{theorem}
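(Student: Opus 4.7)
The plan is to separate the argument into (i) well-posedness of the limit problem, (ii) a priori estimates and compactness for the sequence $\{f^n\}$, (iii) identification of the limit via a monotonicity (Minty) argument, and finally (iv) the uniform-convergence upgrade under the additional hypotheses. Existence and uniqueness of a solution $u$ to (\ref{Eq:QuasiLinearParabolicPDE2}) in the stated class follows by direct appeal to Ladyzhenskaya--Solonnikov--Ural'tseva: Condition \ref{A:MinimalAssumptions} matches the standard structure (coercivity $\alpha\cdot p\geq\nu|p|^2$, linear growth of $\alpha$, $\gamma$ subordinated by $\lambda\in L^{d+2+\eta}$, strict monotonicity in $p$), so the results in Chapters V--VI of \cite{Ladyzhenskaya} yield a unique bounded weak solution that is H\"older continuous up to the parabolic boundary and $W^{(1,2),2}$ in the interior. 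Nothing about the neural-network approximation is needed for this step.

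Next I would derive uniform-in-$n$ energy bounds. Since $\mathcal{G}[f^n]=h^n$, $f^n|_{\partial\Omega_T}=g^n$, $f^n(0,\cdot)=u_0^n$, I would test the equation against $f^n-\phi^n$, where $\phi^n$ is a convenient lift of the boundary datum $g^n$ into $\Omega_T$ (which has small $L^2(0,T;H^1)$ norm since $g^n\to 0$ in $L^2(\partial\Omega_T)$ and one can mollify if needed). Integration by parts combined with the coercivity hypothesis $\alpha(t,x,u,p)\cdot p\geq\nu|p|^2$ and the growth bound $|\gamma|\leq\lambda|p|$ (handled by H\"older with the $L^{d+2+\eta}$-integrability of $\lambda$ and the parabolic Sobolev embedding) gives
\begin{equation*}
\sup_{t\in[0,T]}\|f^n(t,\cdot)\|_{L^2(\Omega)}^2 + \|\nabla f^n\|_{L^2(\Omega_T)}^2 \leq C\Bigl(\|h^n\|_{L^2(\Omega_T)}^2+\|g^n\|_{L^2(\partial\Omega_T)}^2+\|u_0^n\|_{L^2(\Omega)}^2+1\Bigr),
\end{equation*}
uniformly in $n$. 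From the PDE itself, $\partial_t f^n$ is then bounded in a negative-order space (say $L^2(0,T;W^{-1,q}(\Omega))$ for some $q>1$ coming from the growth of $\alpha,\gamma$), so the Aubin--Lions lemma gives strong precompactness of $\{f^n\}$ in $L^2(\Omega_T)$. Passing to a subsequence, $f^n\to f$ strongly in $L^2(\Omega_T)$ (hence in every $L^\rho$, $\rho<2$, after interpolation and Vitali), a.e.\ in $\Omega_T$, with $\nabla f^n\rightharpoonup \nabla f$ weakly in $L^2(\Omega_T)$, and the trace and initial conditions pass to the limit giving $f|_{\partial\Omega_T}=0$, $f(0,\cdot)=u_0$.

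The main obstacle is identifying the limit as a solution of the nonlinear PDE, since $\alpha(t,x,f^n,\nabla f^n)$ is not continuous with respect to weak convergence of $\nabla f^n$. For this I would apply Minty's monotonicity trick using the strict monotonicity assumption $\langle\alpha(t,x,u,p_1)-\alpha(t,x,u,p_2),p_1-p_2\rangle>0$. Concretely, using the weak form
\begin{equation*}
\int_{\Omega_T}\bigl[\partial_t f^n\,\varphi+\alpha(t,x,f^n,\nabla f^n)\cdot\nabla\varphi+\gamma(t,x,f^n,\nabla f^n)\varphi\bigr]\,dt\,dx=\int_{\Omega_T}h^n\varphi\,dt\,dx,
\end{equation*}
testing against $f^n$ (after subtracting the boundary lift) yields, on taking $n\to\infty$, an identity for the lim sup of $\int\alpha(\cdot,f^n,\nabla f^n)\cdot\nabla f^n$. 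Comparing with the analogous integral for any $w$ in place of $\nabla f^n$ and using monotonicity together with Lipschitzness of $\alpha$ in $(u,p)$ on compacts (combined with the a.e.\ convergence of $f^n$) lets one conclude $\alpha(\cdot,f^n,\nabla f^n)\rightharpoonup \alpha(\cdot,f,\nabla f)$ weakly in $L^2$; the lower-order term $\gamma$ is handled analogously via Vitali's theorem and the growth bound. This proves $f$ solves (\ref{Eq:QuasiLinearParabolicPDE2}), and uniqueness from step (i) forces convergence of the full sequence, not merely of subsequences.

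Finally, for the uniform convergence statement, under the additional equicontinuity and uniform boundedness hypothesis on $\{f^n\}$, Arzel\`a--Ascoli on the compact set $\bar{\Omega}_T$ produces a uniformly convergent subsequence; the already-established $L^\rho$ limit must coincide with its uniform limit, and uniqueness again upgrades this to convergence of the full sequence in $C(\bar{\Omega}_T)$. I expect the Minty step to be by far the most delicate portion, both because the source terms $h^n$ are only $L^2$-small (so one cannot use classical stability estimates in stronger norms) and because $\alpha$ is allowed to depend on $u$ in addition to $\nabla u$, which forces one to combine the a.e.\ convergence of $f^n$ with the monotonicity argument rather than applying Minty's lemma directly.
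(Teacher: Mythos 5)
Your overall skeleton (uniform energy estimates, Aubin--Lions compactness, a monotonicity argument to identify the nonlinear limit, Arzel\`a--Ascoli for the uniform-convergence upgrade) matches the shape of the paper's proof, but you deviate at two crucial places, and both deviations introduce real gaps.

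First, the boundary treatment. You propose to lift $g^n$ to an interior function $\phi^n$ with small $L^2(0,T;H^1(\Omega))$ norm ``since $g^n\to 0$ in $L^2(\partial\Omega_T)$.'' This is not justified: an extension operator sending boundary data to functions with small $L^2_t H^1_x$ norm requires control of $g^n$ in (roughly) $L^2(0,T;H^{1/2}(\partial\Omega))$, and (\ref{Eq:ApproxQuasiLinearDecayRates}) only gives $L^2(\partial\Omega_T)$ decay. The paper avoids the lift entirely: it introduces $\hat{f}^n$, the solution of the same source problem but with \emph{homogeneous} boundary data, derives all the energy/compactness estimates for $\hat{f}^n$ (via Porzio's Lemma 4.1, Simon's Aubin--Lions theorem, and BDGO's gradient-a.e.-convergence result), and then shows separately that $f^n-\hat{f}^n\to 0$. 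For this last piece it uses the $\mathcal{C}^{1,2}(\bar{\Omega}_T)$ regularity of the neural networks, the $L^2$ boundary decay (upgraded to almost-uniform convergence along a subsequence), uniqueness, and Vitali. Your ``trace passes to the limit'' claim is also fragile for the same reason: traces are not continuous under merely weak $L^2(0,T;H^1)$ convergence when the boundary values are nonzero.

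Second, and more importantly, the identification of the nonlinear limit. You appeal to Minty to conclude $\alpha(\cdot,f^n,\nabla f^n)\rightharpoonup\alpha(\cdot,f,\nabla f)$ weakly in $L^2$, and then say the lower-order term $\gamma$ ``is handled analogously via Vitali's theorem.'' But Vitali requires \emph{almost everywhere} convergence of $\gamma(\cdot,f^n,\nabla f^n)$, hence a.e.\ convergence of $\nabla f^n$, which the Minty trick as you have stated it does not produce --- it produces only weak convergence of the flux $\alpha$. This is precisely why the paper does something stronger: after obtaining the uniform $L^2(0,T;W^{1,2}_0)$ bound and weak convergence, it invokes Theorem 3.3 of Boccardo--Dall'Aglio--Gallou\"et--Orsina to conclude $\nabla\hat{f}^n\to\nabla u$ a.e.\ in $\Omega_T$. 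That a.e.\ gradient convergence is the crucial input for Vitali, and it lets the paper pass both $\alpha$ and $\gamma$ to the limit in $L^\rho$, $\rho<2$ (for $\alpha$) and in $L^q$, $q=1+d/(d+4)$ (for $\gamma$). Your Minty step could in principle be upgraded to a.e.\ gradient convergence by a Boccardo--Murat style argument (use the strict monotonicity excess as the integrand, show it goes to zero in $L^1$, extract a.e.\ convergence), but that step is genuinely nontrivial and you do not carry it out; as written, the $\gamma$ term falls through the gap. The remaining elements --- existence/uniqueness of $u$ via Ladyzhenskaya and Porzio, the energy/compactness machinery, and the final Arzel\`a--Ascoli step --- align with the paper.
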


The proof of this theorem is in the Appendix. We conclude this section with some remarks and an example.
\begin{remark}
Despite the restriction made to the zero boundary data case, we do expect that our results are also valid for reasonably smooth inhomogeneous boundary data. In addition, if we make further assumptions on the nonlinearities $\alpha(t,x,u,p)$ and $\gamma(t,x,u,p)$ and on the initial data $u_{0}(x)$, then one can establish existence and uniqueness of classical solutions, see for example Section 6 of Chapter V in \cite{Ladyzhenskaya} for details.  As a matter of fact the results of Chapter V.6 in \cite{Ladyzhenskaya} show that with assuming a little bit more on the growth of the derivatives of the nonlinear functions $\alpha(t,x,u,p),\gamma(t,x,u,p)$ will lead to $\nabla_{x}u\in \mathcal{C}^{0,\delta',\delta'/2}(\Omega_{T})$ for some $\delta'>0$. Furthermore, we remark here that stronger claims can be made if more properties are known in regards to the given approximating family $\{f^{n}\}$ such as, for example,  a-priori bounds on appropriate Sobolev norms, but we do not explore this further here.
\end{remark}

\begin{remark}\label{R:EquicontinuousApprox}
The uniform, in $n$,  $L^{2}$ bound for the sequence $\{f^{n}\}_{n\in\mathbb{N}}$ is easily satisfied for a bounded neural network approximation sequence $f^{n}(t,x)$. However, we believe that it is true for a wider class of models, after all one expects that to be true if $f^{n}$ indeed converges in $L^{\rho}$ for $\rho<2$. The condition on equicontinuity for $\{f^{n}(t,x)\}$  allows to both simplify the proof and make a stronger claim as well. However, it is only a sufficient condition and not necessary. The paper, \cite{ChandraSingh2004}, see Theorems 19 and 20 therein, discusses structural restrictions (a-priori boundedness and summability) that can be imposed on the unknown weights of feedforward neural networks, belonging in the class $\mathfrak{C}(\psi)=\bigcup_{n\geq 1}\mathfrak{C}^{n}(\psi)$ as defined by (\ref{Eq:FeedForwardNN}), which then guarantee both equicontinuity and universal approximation properties of the neural network for continuous and bounded functions. As it is also discussed in \cite{ChandraSingh2004}, equicontinuity is also related to fault-tolerance properties of neural networks, a subject worthy of further study in the context of PDEs. However, we do not discuss this further here as this would be a topic for a different paper.
\end{remark}

Let us present the case of linear parabolic PDEs  in Example \ref{Ex:LinearCase} below.
\begin{example}[Linear case]\label{Ex:LinearCase}
Let us assume that the operator $\mathcal{G}$ is linear in $u$ and $\nabla u$. In particular, let us set
\[
\alpha_{i}(t,x,u,p)=\sum_{j=1}^{n}\left(\sigma\sigma^{T}\right)_{i,j}(t,x)p_{j}, i=1,\cdots d
\]
and
\[
\gamma(t,x,u,p)=-\left<b(t,x),p\right> +\sum_{i,j=1}^{d}\frac{\partial}{\partial x_{i}}\left(\sigma\sigma^{T}\right)_{i,j}(t,x)p_{j}-c(t,x) u.
\]

Assume that there are positive constants $\nu,\mu>0$ such that for every $\xi\in\mathbb{R}^{d}$ the matrix $\left[\left(\sigma\sigma^{T}\right)_{i,j}(t,x)\right]_{i,j=1}^{d} $ satisfies
\[
\nu |\xi|^{2}\leq \sum_{i,j=1}^{d}\left(\sigma\sigma^{T}\right)_{i,j}(t,x)\xi_{i}\xi_{j}\leq \mu |\xi|^{2}
\]
and that the coefficients $b$ and $c$ are such that
\begin{align}
&\norm{\sum_{i=1}^{d}b_{i}^{2}}_{q,r,\Omega_{T}}+\norm{c}_{q,r,\Omega_{T}}\leq \mu, \text{ for some } \mu>0\nonumber
\end{align}
where we recall for example $\norm{c}_{q,r,\Omega_{T}}=\left(\int_{0}^{T}\left(\int_{\Omega}|c(t,x)|^{q}dx\right)^{r/q}\right)^{1/r}$ and  $r,q$ satisfy the relations
\begin{align}
\frac{1}{r}+\frac{d}{2q}=1\nonumber\\
q\in(d/2,\infty], r\in[1,\infty), \textrm{ for } d\geq 2,\nonumber\\
q\in[1,\infty], r\in[1,2], \textrm{ for } d=1.\nonumber
\end{align}

In particular, the previous bounds always hold in the case of coefficients $b$ and $c$ that are bounded in $\Omega_{T}$. Under these conditions, standard results for linear PDE's, see for instance Theorem 4.5 of Chapter III of \cite{Ladyzhenskaya} for a related result, show that approximation results analogous to that of Theorem \ref{T:MainConvTheorem} hold. 
\end{example}

\section{Conclusion}  \label{Conclusion}
We believe that deep learning could become a valuable approach for solving high-dimensional PDEs, which are important in physics, engineering, and finance. The PDE solution can be approximated with a deep neural network which is trained to satisfy the differential operator, initial condition, and boundary conditions. We prove that the neural network converges to the solution of the partial differential equation as the number of hidden units increases.

Our deep learning algorithm for solving PDEs is meshfree, which is key since meshes become infeasible in higher dimensions. Instead of forming a mesh, the neural network is trained on batches of randomly sampled time and space points. The approach is implemented for a class of high-dimensional free boundary PDEs in up to $200$ dimensions with accurate results. We also test it on a high-dimensional Hamilton-Jacobi-Bellman PDE with accurate results.

\textcolor{black}{The DGM algorithm can be easily modified to apply to hyperbolic, elliptic, and partial-integral differential equations. The algorithm remains essentially the same for these other types of PDEs. However, numerical performance for these other types of PDEs remains to be be investigated.}

It is also important to put the numerical results in Sections \ref{NumericalAnalysis}, \ref{HJBsection} and \ref{BurgerEquation} in a proper context. PDEs with highly non-monotonic or oscillatory solutions may be more challenging to solve and further developments in architecture will be necessary. Further numerical development and testing is therefore required to better judge the usefulness of deep learning for the solution of PDEs in other applications. However, the numerical results of this paper demonstrate that there is sufficient evidence to further explore deep neural network approaches for solving PDEs.


In addition, it would be of interest to establish results analogous to Theorem \ref{T:MainConvTheorem} for PDEs beyond the class of \textcolor{black}{quasilinear parabolic PDEs} considered in this paper. \textcolor{black}{Stability analysis of deep learning and machine learning algorithms for solving PDEs is also an important question. It would certainly be interesting to study machine learning algorithms that use a more direct variational formulation of the involved PDEs.} We leave these questions for future work.

\appendix

\section{Proofs of convergence results}\label{S:Proofs}
In this section we have gathered the proofs of the theoretical results of Section  \ref{ApproximationProof}.

\begin{proof}[Proof of Theorem \ref{T:UniversalApproximationThmPDE}]

By Theorem 3 of \cite{Hornik} we know that there is a function $f\in\mathfrak{C}(\psi) $ that is uniformly $2-$dense on compacts of $\mathcal{C}^{2}(\mathbb{R}^{1+d})$.  This means that for $u\in\mathcal{C}^{1,2}([0,T]\times\mathbb{R}^{d})$ and $\epsilon>0$, there is $f\in\mathfrak{C}(\psi) $ such that
\begin{align}
\sup_{(t,x)\in\Omega_{T}}|\partial_{t}u(t,x)-\partial_{t} f(t,x;\theta)|+\max_{|a|\leq 2}\sup_{(t,x)\in \bar{\Omega}_{T}} |\partial^{(a)}_{x}u(t,x)-\partial^{(a)}_{x}f(t,x;\theta)|<\epsilon\label{Eq: HornikResult}
\end{align}

\textcolor{black}{We have assumed that $(u,p)\mapsto\hat{\gamma}(t,x,u,p)$  is locally Lipschitz continuous in $(u,p)$ with Lipschitz constant that can have at most polynomial growth in $u$ and $p$ , uniformly with respect to $t,x$. This means that
\begin{align}
\left|\hat{\gamma}(t,x,u,p)-\hat{\gamma}(t,x,v,s)\right|&\leq \left(|u|^{q_{1}/2}+|p|^{q_{2}/2}+|v|^{q_{3}/2}+|s|^{q_{4}/2}\right)\left(|u-v|+|p-s|\right).\nonumber
\end{align}
for some constants $0\leq q_{1},q_{2},q_{3},q_{4}<\infty$. Therefore we obtain, using H\"{o}lder inequality with exponents $r_{1},r_{2}$,
\begin{align}
&\int_{\Omega_{T}}\left|\hat{\gamma}(t,x,f,\nabla_{x}f)-\hat{\gamma}(t,x,u,\nabla_{x}u)\right|^{2}d\nu_{1}(t,x)\leq\nonumber\\
&\leq \int_{\Omega_{T}}\left(|f(t,x;\theta)|^{q_{1}}+|\nabla_{x}f(t,x;\theta)|^{q_{2}}+|u(t,x)|^{q_{3}}+|\nabla_{x}u(t,x)|^{q_{4}}\right) \nonumber\\
&\qquad\times\left(|f(t,x;\theta)-u(t,x)|^{2}+|\nabla_{x}f(t,x;\theta)-\nabla_{x}u(t,x)|^{2}\right)d\nu_{1}(t,x)\nonumber\\
&\leq \left(\int_{\Omega_{T}}\left(|f(t,x;\theta)|^{q_{1}}+|\nabla_{x}f(t,x;\theta)|^{q_{2}}+|u(t,x)|^{q_{3}}+|\nabla_{x}u(t,x)|^{q_{4}}\right)^{r_{1}}d\nu_{1}(t,x)\right)^{1/r_{1}}\nonumber\\
&\qquad \times\left(\int_{\Omega_{T}}\left(|f(t,x;\theta)-u(t,x)|^{2}+|\nabla_{x}f(t,x;\theta)-\nabla_{x}u(t,x)|^{2}\right)^{r_{2}}d\nu_{1}(t,x)\right)^{1/r_{2}}\nonumber\\
&\leq K\left(\int_{\Omega_{T}}\left(|f(t,x;\theta)-u(t,x)|^{q_{1}}+|\nabla_{x}f(t,x;\theta)-\nabla_{x}u(t,x)|^{q_{2}}+|u(t,x)|^{q_{1}\vee q_{3}}+|\nabla_{x}u(t,x)|^{q_{2}\vee q_{4}}\right)^{r_{1}}d\nu_{1}(t,x)\right)^{1/r_{1}}\nonumber\\
&\qquad \times\left(\int_{\Omega_{T}}\left(|f(t,x;\theta)-u(t,x)|^{2}+|\nabla_{x}f(t,x;\theta)-\nabla_{x}u(t,x)|^{2}\right)^{r_{2}}d\nu_{1}(t,x)\right)^{1/r_{2}}\nonumber\\
& \leq K\left(\epsilon^{q_{1}}+\epsilon^{q_{2}}+\sup_{\Omega_{T}}|u|^{q_{1}\vee q_{3}}+\sup_{\Omega_{T}}|\nabla_{x}u|^{q_{2}\vee q_{4}}\right)\epsilon^{2}\label{Eq:BoundNonlinearTerm}
\end{align}
where the unimportant constant $K<\infty$ may change from line to line and for two numbers $q_{1}\vee q_{3}=\max\{q_{1},q_{3}\}$.} In the last step we used (\ref{Eq: HornikResult}).

\textcolor{black}{In addition, we have also assumed that for every $i,j\in\{1,\cdots d\}$, the mapping $(u,p)\mapsto \frac{\partial\alpha_{i}(t,x,u, p)}{\partial p_{j}}$ is locally Lipschitz in $(u,p)$ with Lipschitz constant that can have at most polynomial growth on $u$ and $p$, uniformly with respect to $t,x$.  This means that
\begin{align}
\left|\frac{\partial\alpha_{i}(t,x,u, p)}{\partial p_{j}}-\frac{\partial\alpha_{i}(t,x,v, s)}{\partial s_{j}}\right|&\leq \left(|u|^{q_{1}/2}+|p|^{q_{2}/2}+|v|^{q_{3}/2}+|s|^{q_{4}/2}\right)\left(|u-v|+|p-s|\right).\nonumber
\end{align}
for some constants $0\leq q_{1},q_{2},q_{3},q_{4}<\infty$. Denote for convenience
\[
\xi(t,x,u,\nabla u, \nabla^{2}u)=\sum_{i,j=1}^{d}\frac{\partial\alpha_{i}(t,x,u(t,x), \nabla u(t,x))}{\partial u_{x_{j}}}\partial_{x_{i},x_{j}}u(t,x).
\]
Then, similarly to (\ref{Eq:BoundNonlinearTerm}) we have after an application of H\"{o}lder inequality, for some constant $K<\infty$ that may change from line to line,
\begin{align}
&\int_{\Omega_{T}}\left|\xi(t,x,f,\nabla_{x}f, \nabla^{2}_{x}f)-\xi(t,x,u,\nabla_{x}u,\nabla^{2}_{x}u)\right|^{2}d\nu_{1}(t,x)\leq\nonumber\\
&\leq\int_{\Omega_{T}}\left|\sum_{i,j=1}^{d}\left(\frac{\partial\alpha_{i}(t,x,f(t,x;\theta), \nabla f(t,x;\theta))}{\partial f_{x_{j}}}-\frac{\partial\alpha_{i}(t,x,u(t,x), \nabla u(t,x))}{\partial u_{x_{j}}}\right)\partial_{x_{i},x_{j}}u(t,x)\right|^{2}d\nu_{1}(t,x)\nonumber\\
&\quad+\int_{\Omega_{T}}\left|\sum_{i,j=1}^{d}\frac{\partial\alpha_{i}(t,x,f(t,x;\theta), \nabla f(t,x;\theta))}{\partial f_{x_{j}}}\left(\partial_{x_{i},x_{j}}f(t,x;\theta)-\partial_{x_{i},x_{j}}u(t,x)\right)\right|^{2}d\nu_{1}(t,x)\nonumber\\
&\leq K \sum_{i,j=1}^{d}\left(\int_{\Omega_{T}}\left|\partial_{x_{i},x_{j}}u(t,x)\right|^{2p}d\nu_{1}(t,x)\right)^{1/p}\times\nonumber\\
&\quad\times\left(
\int_{\Omega_{T}}\left|\frac{\partial\alpha_{i}(t,x,f(t,x;\theta), \nabla f(t,x;\theta))}{\partial f_{x_{j}}}-\frac{\partial\alpha_{i}(t,x,u(t,x), \nabla u(t,x))}{\partial u_{x_{j}}}\right|^{2q}d\nu_{1}(t,x)\right)^{1/q}+\nonumber\\
&\quad+ K \sum_{i,j=1}^{d}\left(\int_{\Omega_{T}}\left|\frac{\partial\alpha_{i}(t,x,f, \nabla f)}{\partial f_{x_{j}}}\right|^{2p} d\nu_{1}(t,x)\right)^{1/p} \left(\int_{\Omega_{T}}\left|\partial_{x_{i},x_{j}}f(t,x;\theta)-\partial_{x_{i},x_{j}}u(t,x)\right|^{2q}d\nu_{1}(t,x)\right)^{1/q}\nonumber
\end{align}
\begin{align}
&\leq K \sum_{i,j=1}^{d}\left(\int_{\Omega_{T}}\left|\partial_{x_{i},x_{j}}u(t,x)\right|^{2p}d\nu_{1}(t,x)\right)^{1/p} \times\nonumber\\
&\quad\times\left(\int_{\Omega_{T}}\left(|f(t,x;\theta)-u(t,x)|^{q_{1}}+|\nabla_{x}f(t,x;\theta)-\nabla_{x}u(t,x)|^{q_{2}}+|u(t,x)|^{q_{1}\vee q_{3}}+|\nabla_{x}u(t,x)|^{q_{2}\vee q_{4}}\right)^{qr_{1}}d\nu_{1}(t,x)\right)^{1/(qr_{1})}\nonumber\\
&\quad \times\left(\int_{\Omega_{T}}\left(|f(t,x;\theta)-u(t,x)|^{2}+|\nabla_{x}f(t,x;\theta)-\nabla_{x}u(t,x)|^{2}\right)^{qr_{2}}d\nu_{1}(t,x)\right)^{1/(qr_{2})}\nonumber\\
&\quad+ K \sum_{i,j=1}^{d}\left(\int_{\Omega_{T}}\left|\frac{\partial\alpha_{i}(t,x,f, \nabla f)}{\partial f_{x_{j}}}\right|^{2p} d\nu_{1}(t,x)\right)^{1/p} \left(\int_{\Omega_{T}}\left|\partial_{x_{i},x_{j}}f(t,x;\theta)-\partial_{x_{i},x_{j}}u(t,x)\right|^{2q}d\nu_{1}(t,x)\right)^{1/q}\nonumber\\
& \leq K\epsilon^{2},\label{Eq:BoundNonlinearTerm2}
\end{align}
where in the last step we followed the computation in (\ref{Eq:BoundNonlinearTerm}) and used  (\ref{Eq: HornikResult}).}

Using (\ref{Eq: HornikResult}) and (\ref{Eq:BoundNonlinearTerm})-(\ref{Eq:BoundNonlinearTerm2}) we subsequently obtain for the objective function \textcolor{black}{(note that $\mathcal{G}[u](t,x)=0$ for $u$ that solves the PDE)
 \begin{align}
 J(f)&=\norm{  \mathcal{G}[f](t,x)}_{\Omega_{T}, \nu_1}^2 +   \norm{  f(t,x; \theta ) - g(t,x) }_{ \partial \Omega_{T}, \nu_2}^2 + \norm{ f(0, x; \theta) - u_0(x) }_{ \Omega, \nu_3}^2\nonumber\\
 &=\norm{ \mathcal{G}[f](t,x)-\mathcal{G}[u](t,x) }_{\Omega_{T}, \nu_1}^2 +   \norm{  f(t,x; \theta ) - g(t,x) }_{ \partial \Omega_{T}, \nu_2}^2 + \norm{ f(0, x; \theta) - u_0(x) }_{ \Omega, \nu_3}^2\nonumber\\
 &\leq \int_{\Omega_{T}}\left|\partial_{t}u(t,x)-\partial_{t} f(t,x;\theta)\right|^{2}d\nu_{1}(t,x)+
\int_{\Omega_{T}}\left|\xi(t,x,f,\nabla f, \nabla^{2}f)-\xi(t,x,u,\nabla u, \nabla^{2}u)\right|^{2}d\nu_{1}(t,x)\nonumber\\
 &+\int_{\Omega_{T}}\left|\gamma(t,x,f,\nabla_{x}f)-\gamma(t,x,u,\nabla_{x}u)\right|^{2}d\nu_{1}(t,x)+
 \int_{\partial\Omega_{T}}\left|f(t,x;\theta)-u(t,x)\right|^{2}d\nu_{2}(t,x)
 +\nonumber\\
 &\qquad+\int_{\Omega}\left|f(0,x;\theta)-u(0,x)\right|^{2}d\nu_{3}(t,x)\nonumber\\
 & \leq K\epsilon^{2}\nonumber
 \end{align}}
for an appropriate constant $K<\infty$. The last step completes the proof of the Theorem after rescaling $\epsilon$.
\end{proof}

\begin{proof}[Proof of Theorem \ref{T:MainConvTheorem}]

\textcolor{black}{Existence, regularity and uniqueness for (\ref{Eq:QuasiLinearParabolicPDE2}) follows from Theorem 2.1 \cite{Porzio}  combined with Theorems 6.3-6.5 of Chapter V.6 in \cite{Ladyzhenskaya} (see also Theorem 6.6 of Chapter V.6 of \cite{Ladyzhenskaya}). Boundedness follows from Theorem 2.1 in \cite{Porzio} and Chapter V.2 in \cite{Ladyzhenskaya}. The convergence proof follows by the smoothness of the neural networks together with compactness arguments as we explain below.}

\textcolor{black}{Let us first consider problem (\ref{Eq:ApproxQuasiLinearParabolicPDE}) with $g^{n}(t,x)=0$ and let us denote the solution to this problem by $\hat{f}^{n}(t,x)$.} 
\textcolor{black}{Due to Condition \ref{A:MinimalAssumptions}, Lemma 4.1 of \cite{Porzio}  applies and gives that $\{\hat{f}^{n}\}_{n\in\mathbb{N}}$ is uniformly bounded with respect to $n$ in at least $L^{\infty}\left(0,T;L^{2}(\Omega)\right)\cap L^{2}\left(0,T; W^{1,2}_{0}(\Omega)\right)$ (in regards to such uniform energy bound results we also refer the reader to Theorem 2.1 and Remark 2.14 of \cite{BoccardoPorzioPrimo} for the case $\gamma=0$ and to \cite{Magliocca, NardoEtAll}  for related results in more general cases). As a  matter of fact $\hat{f}^{n}$ is more regular than stated, see Section 6, Chapter V of \cite{Ladyzhenskaya}, but we will not make use of this fact in the convergence proof of $\hat{f}^{n}$ to $u$. These uniform energy bounds imply that we can extract a subsequence, denoted also by $\{\hat{f}^{n}\}_{n\in\mathbb{N}}$, which converges to some $u$ in the weak-* sense in $L^{\infty}\left(0,T;L^{2}(\Omega)\right)$ and weakly in $L^{2}\left(0,T; W^{1,2}_{0}(\Omega)\right)$ and to some $v$ weakly in $L^{2}(\Omega)$ for every fixed $t\in(0,T]$. }

\textcolor{black}{Next let us set $q=1+\frac{d}{d+4}\in (1,2)$ and note that for conjugates, $r_{1},r_{2}>1$ such that $1/r_{1}+1/r_{2}=1$
\begin{align}
\int_{\Omega_{T}}\left|\gamma(t,x,\hat{f}^{n},\nabla_{x}\hat{f}^{n})\right|^{q}dtdx &\leq \int_{\Omega_{T}}\left|\lambda(t,x)\right|^{q}|\nabla_{x}\hat{f}^{n}(t,x)|^{q}dtdx\nonumber\\
&\leq\left(\int_{\Omega_{T}}\left|\lambda(t,x)\right|^{r_{1}q}dtdx\right)^{1/r_{1}}\left(\int_{\Omega_{T}}|\nabla_{x}\hat{f}^{n}(t,x)|^{r_{2}q} dtdx\right)^{1/r_{2}}.\label{Eq:MomentBoundGammaTerm}
\end{align}
}

\textcolor{black}{Let us choose $r_{2}=2/q>1$. Then we calculate $r_{1}=\frac{r_{2}}{r_{2}-1}=\frac{2}{2-q}$. Hence, we have that $r_{1}q=d+2$. Recalling the assumption $\lambda\in L^{d+2}(\Omega_{T})$ and the uniform bound on the $\norm{\nabla_{x}\hat{f}^{n}}_{2}$ we subsequently obtain that for $q=1+\frac{d}{d+4}$, there is a constant $C<\infty$ such that
\begin{align}
\int_{\Omega_{T}}\left|\gamma(t,x,\hat{f}^{n},\nabla_{x}\hat{f}^{n})\right|^{q}dtdx &\leq C.\nonumber
\end{align}}

\textcolor{black}{The latter estimate together with the  growth assumptions on $\alpha(\cdot)$  from Condition \ref{A:MinimalAssumptions}, imply that $\{\partial_{t}\hat{f}^{n}\}_{n\in\mathbb{N}}$ is bounded uniformly with respect to $n$ in $L^{1+d/(d+4)}(\Omega_{T})$ and in $L^{2}(0,T; W^{-1,2}(\Omega))$. Consider the conjugates $1/\delta_{1}+1/\delta_{2}=1$  with $\delta_{2}> \max\{2,d\}$. Due to the embedding
\[
W^{-1,2}(\Omega)\subset W^{-1,\delta_{1}}(\Omega),\quad L^{q}(\Omega)\subset W^{-1,\delta_{1}}(\Omega), \text{ and } L^{2}(\Omega)\subset W^{-1,\delta_{1}}(\Omega),
\]
we have that $\{\partial_{t}\hat{f}^{n}\}_{n\in\mathbb{N}}$ is bounded uniformly with respect to $n$ in $L^{1}(0,T; W^{-1,\delta_{1}}(\Omega))$. Define now the spaces $X=W^{1,2}_{0}(\Omega)$, $B=L^{2}(\Omega)$ and $Y=W^{-1,\delta_{1}}(\Omega)$, and notice that
\[
X\subset B\subset Y
\]
with the first embedding being compact.  Then, Corollary 4 of \cite{Simon} yields relative compactness of $\{\hat{f}^{n}\}_{n\in\mathbb{N}}$ in $L^{2}(\Omega_{T})$, which means that $\{\hat{f}^{n}\}_{n\in\mathbb{N}}$ converges strongly to $u$ in that space. Thus, up to subsequences, $\{\hat{f}^{n}\}_{n\in\mathbb{N}}$ converges almost everywhere to u in $\Omega_{T}$.}

\textcolor{black}{The nonlinearity of the $\alpha$ and $\gamma$ functions with respect to the gradient prohibits us from passing to the limit directly in the respective weak formulation. However, the uniform boundedness of $\{\hat{f}_{n}\}_{n\in\mathbb{N}}$ in $L^{\sigma}\left(0,T; W^{1,\sigma}_{0}(\Omega)\right)$ with $\sigma>1$ (in fact here $\sigma=2$) and its weak convergence to $u$ in that space,  allows us to conclude, as in Theorem 3.3 of \cite{BDGO1997},  that
\[
\nabla \hat{f}^{n}\rightarrow \nabla u \textrm{ almost everywhere in } \Omega_{T}.
\]}

\textcolor{black}{Hence, we obtain that $\{\hat{f}^{n}\}_{n\in\mathbb{N}}$ converges  to $u$ strongly also in $L^{\rho}\left(0,T; W^{1,\rho}_{0}(\Omega)\right)$ for every $\rho<2$. }

\textcolor{black}{In preparation to passing to the limit as $n\rightarrow \infty$ in the weak formulation, we need to study the behavior of the nonlinear terms. Recalling the assumptions on $\alpha(t,x,u,p)$ we have for $\rho<2$ and for a measurable set $A\subset\Omega_{T}$ (the constant $K<\infty$ may change from line to line)
\begin{align}
&\int_{A}\left|\alpha(t,x,\hat{f}^{n},\nabla \hat{f}^{n})\right|^{\rho}dtdx\leq K \left[\int_{A}|\kappa(t,x)|^{\rho}dtdx+\int_{A}|\nabla \hat{f}^{n}(t,x)|^{\rho}dtdx\right]\nonumber\\
&\qquad\leq K \left[\int_{A}|\kappa(t,x)|^{\rho}dtdx+\left(\int_{\Omega_{T}}|\nabla \hat{f}^{n}(t,x)|^{2}dtdx\right)^{\rho/2}|A|^{1-\rho/2}\right]\nonumber\\
&\qquad\leq K \left[\int_{A}|\kappa(t,x)|^{\rho}dtdx+|A|^{1-\rho/2}\right].\nonumber
\end{align}
In the latter display we used H\"{o}der inequality with exponent $2/\rho>1$.} \textcolor{black}{By Vitali's theorem we then conclude that
\[
\alpha(t,x,\hat{f}^{n},\nabla \hat{f}^{n})\rightarrow \alpha(t,x,u,\nabla u) \textrm{ strongly in } L^{\rho}(\Omega_{T})
\]
as $n\rightarrow\infty$, for every $1<\rho<2$. For the same reason, an analogous estimate to (\ref{Eq:MomentBoundGammaTerm}), gives
\[
\int_{A}\left|\gamma(t,x,\hat{f}^{n},\nabla_{x}\hat{f}^{n})\right|^{q}dtdx \leq  K \left(\int_{A}\left|\lambda(t,x)\right|^{d+2}dtdx\right)^{(2-q)/2} \leq K |A|^{\frac{\eta}{d+2+\eta}}
\]
 implying, via Vitali's theorem,  that
\[
\gamma(t,x,\hat{f}^{n},\nabla \hat{f}^{n})\rightarrow \gamma(t,x,u,\nabla u) \textrm{ strongly in } L^{q}(\Omega_{T})
\]
as $n\rightarrow\infty$, for $q=1+\frac{d}{d+4}$.}

\textcolor{black}{
Notice also that by construction we have that the initial condition $u^{n}_{0}$  converges to $u_{0}$ strongly in $L^{2}(\Omega)$. The weak formulation of the PDE (\ref{Eq:ApproxQuasiLinearParabolicPDE}) with $g^{n}=0$ reads as follows. For every $t_{1}\in(0,T]$
\begin{align}
&\int_{\Omega_{t_{1}}}\left[-\hat{f}^{n}\partial_{t}\phi +\left<\alpha(t,x,\hat{f}^{n},\nabla \hat{f}^{n}), \nabla \phi \right>+(\gamma(t,x,\hat{f}^{n},\nabla \hat{f}^{n})-h^{n})\phi\right](t,x)dxdt\nonumber\\
&\quad +\int_{\Omega}\hat{f}^{n}(t_{1},x)\phi(t_{1},x)dx-\int_{\Omega}u^{n}_{0}(x)\phi(0,x)dx=0\nonumber
\end{align}
for every $\phi\in C^{\infty}_{0}(\Omega_{T})$. Using the above convergence results, we then obtain that the limit point $u$ satisfies for every $t_{1}\in(0,T]$ the equation
\begin{align}
&\int_{\Omega_{t_{1}}}\left[-u\partial_{t}\phi +\left<\alpha(t,x,u,\nabla u), \nabla \phi \right>+\gamma(t,x,u,\nabla u)\phi\right](t,x)dxdt\nonumber\\
&\quad +\int_{\Omega}u(t_{1},x)\phi(t_{1},x)dx-\int_{\Omega}u_{0}(x)\phi(0,x)dx=0,\nonumber
\end{align}
which is the weak formulation of the equation (\ref{Eq:QuasiLinearParabolicPDE2}).}

\textcolor{black}{It remains to discuss the convergence of $f^{n}-\hat{f}^{n}$ to zero, where we recall that $f^{n}$ is the neural network approximation satisfying (\ref{Eq:ApproxQuasiLinearParabolicPDE}) and $\hat{f}^{n}$ satisfies (\ref{Eq:ApproxQuasiLinearParabolicPDE}) with $g^{n}=0$. The functions $f^{n}\in\mathcal{C}^{1,2}(\bar{\Omega}_{T})$ and $\bar{\Omega}_{T}$ is compact. We have also assumed that $\{f^{n}\}_{n}$ is uniformly bounded in $L^{2}(\Omega_{T})$. This implies that, up to a subsequence, $f^{n}$ will converge at least weakly in $L^{2}(\Omega_{T})$. Moreover,  the boundary values $g^{n}(t,x)$ (which is nothing else by $f^{n}(t,x)$ evaluated at the smooth boundary $\partial\Omega_{T}$) in $(\ref{Eq:ApproxQuasiLinearParabolicPDE})$  converge to zero strongly in $L^{2}$. It is then a standard result that  $g^{n}$, i.e., $f^{n}$ evaluated at the boundary, converges to zero, at least, almost uniformly along a subsequence, see for example Lemma 2.1 in Chapter II of \cite{Ladyzhenskaya}. As it then follows, for example, by the proof of Theorems 6.3-6.4-6.5  of Chapter V.6 in \cite{Ladyzhenskaya}, using smoothness and uniqueness,  $f^{n}$ will differ from the solution to the PDE $(\ref{Eq:ApproxQuasiLinearParabolicPDE})$ with $g^{n}=0$,  $\hat{f}^{n}(t,x)$, by a negligible amount as $n\rightarrow\infty$ in the almost everywhere sense. The assumed uniform $L^{2}$ bound for $\{f^{n}\}_{n\in\mathbb{N}}$ together with the previously derived uniform $L^{2}(\Omega_{T})$ bound for $\{\hat{f}^{n}\}_{n\in\mathbb{N}}$ yield uniform $L^{2}(\Omega_{T})$ boundedness for $\{f^{n}-\hat{f}^{n}\}_{n\in\mathbb{N}}$. Then, by Vitali's theorem again, we get that $\{f^{n}-\hat{f}^{n}\}_{n\in\mathbb{N}}$ goes to zero strongly in $L^{\rho}(\Omega_{T})$ for every $\rho<2$. }

\textcolor{black}{The previously derived strong convergence of $\{f^{n}-\hat{f}^{n}\}_{n\in\mathbb{N}}$  to zero  in $L^{\rho}(\Omega_{T})$ for every $\rho<2$, together with the strong $L^{2}(\Omega_{T})$ convergence of $\{\hat{f}^{n}\}_{n\in\mathbb{N}}$  to $u$, conclude the proof of the convergence in $L^{\rho}(\Omega_{T})$ for every $\rho<2$ using triangle inequality.}

\textcolor{black}{If $\{f^{n}(t,x)\}$ is equicontinuous then, Lemma 3.2 of \cite{Garcia1970} gives uniform convergence of $g^{n}$ to zero. Hence, by the previous analysis, it will certainly be true that  $\{f^{n}\}_{n\in\mathbb{N}}$  converges to $u$ in $L^{\rho}(\Omega_{T})$ for every $\rho<2$. The $L^{\rho}$ convergence to zero together with boundedness and equicontinuity of the sequence $\{f^{n}(t,x)\}$ results then in uniform convergence due to the well known Arzel\`{a}-Ascoli theorem. }
\end{proof}

\end{document}